\theoremstyle{plain}
\newtheorem{Theorem}{Theorem}
\newtheorem{theorem}[Theorem]{Theorem}
\newtheorem{assumption}[Theorem]{Assumption}
\newtheorem{proposition}[Theorem]{Proposition}
\newtheorem{corollary}[Theorem]{Corollary} 
\newtheorem{definition}[Theorem]{Definition}
\theoremstyle{definition}
\newtheorem{example}[Theorem]{Example}
\newcommand\xqed[1]{%
  \leavevmode\unskip\penalty9999 \hbox{}\nobreak\hfill
  \quad\hbox{#1}}
\newcommand\demo{\xqed{$\circ$}}
\newcommand\demoo{\xqed{$\triangle$}}
\theoremstyle{remark}
\newtheorem{remark}[Theorem]{Remark}
\numberwithin{Theorem}{section}
\numberwithin{equation}{section}
\newcommand{\md}{{\, \mathrm{d} }}
\newcommand{\deriv}[2]{\frac{\md {#1}}{\md {#2}}}
\newcommand{\tuborg}[1]{\{ #1 \}}
\DeclareMathOperator{\E}{E}
\DeclareMathOperator{\pr}{P}
\DeclareMathAlphabet{\mathpzc}{OMS}{pzc}{m}{it}
\newcommand{\J}{\mathpzc{J}}
\newcommand{\Jp}{\J^\mathrm{p}}
\newcommand{\Jf}{\J^\mathrm{f}}
\newcommand*\expandableInput[1]{\@@input#1 }
\newcommand{\Rmnum}[1]{\expandafter\@slowromancap\romannumeral #1@}
\tikzset{
    >=stealth',
    punkt/.style={
           rectangle,
           rounded corners,
           draw=black, thick,
           text width=7em,
           minimum height=2em,
           text centered},
    punktl/.style={
           re
           tangle,
           rounded corners,
           draw=black, thick,
           
           text width=7em,
           minimum height=2em,
           text centered},
    pil/.style={
           ->,
           shorten <=4pt,
           shorten >=4pt,},
    pildotted/.style={
           ->,
           shorten <=4pt,
           shorten >=4pt,
  dotted,
  },
  external/system call={pdflatex \tikzexternalcheckshellescape 
                                        -halt-on-error
                                        -interaction=batchmode 
                                        -jobname "\image" "\texsource"
                                        && pdftops -eps "\image.pdf"}
}
\let\@fnsymbol\@arabic
\title{Computation of bonus in multi-state life insurance}
\author[1,$\star$]{Jamaal Ahmad}
\author[2]{Kristian Buchardt}
\author[1,2]{Christian Furrer}
\affil[1]{\footnotesize Department of Mathematical Sciences, University of Copenhagen, Universitetsparken 5, DK-2100 Copenhagen \O, Denmark.}
\affil[2]{\footnotesize PFA Pension, Sundkrogsgade 4, DK-2100 Copenhagen \O, Denmark.}
\affil[$\star$]{\footnotesize Corresponding author. E-mail: \href{mailto:jamaal@math.ku.dk}{jamaal@math.ku.dk}.}
\date{ }
\begin{document}
\maketitle{}

\addtocounter{footnote}{4} 

\begin{center}
{\sc Abstract}
\end{center}
{\small

We consider computation of market values of bonus payments in multi-state with-profit life insurance. The bonus scheme consists of additional benefits bought according to a dividend strategy that depends on the past realization of financial risk, the current individual insurance risk, the number of additional benefits currently held, and so-called portfolio-wide means describing the shape of the insurance business. We formulate numerical procedures that efficiently combine si-\linebreak mulation of financial risk with classic methods for the outstanding insurance risk. Special attention is given to the case where the number of additional benefits bought only depends on the financial risk. Methods and results are illustrated via a numerical example.

\vspace{5mm}

\textbf{Keywords:} Market consistent valuation; With-profit life insurance; Participating life insurance; Economic scenarios; Portfolio-wide means

\vspace{5mm}

\textbf{2010 Mathematics Subject Classification:} 60J28; 91G60; 91B30

\textbf{JEL Classification:} G22; C63 

}

 	 	 	
\section{Introduction}\label{sec:intro}

The potential of systematic surplus in multi-state with-profit life insurance (sometimes referred to as participating life insurance) leads to bonus payments that depend on the development of the financial market and the states of the insured. This dependence is typically non-linear and involves the whole paths of the processes governing the financial market and the states of the insured. Consequently, the computation of market values of bonus payments lies outside the scope of classic backward and forward methods. In this paper, we present computational schemes for a selection of these more involved market values using a combined approach in which we simulate the financial risk while retaining classic analytical methods and numerical methods for differential equations in regards to the outstanding insurance risk.

In the preexisting literature on valuation of both with-profit and equity/unit-linked life insurance, the focus is prevalently on financial risk, so that it is not uncommon to disregard biometric and behavioral risks more or less completely, see e.g.~\cite{Bauer2006,Bauer2008} and, concerning with-profit life insurance, cf.~the discussion in~\cite[][Section 1]{JensenSchomacker2015}. Exceptions include for instance~\cite{Bacinello2001,mollersteffensen} and, more recently, \cite{BacinelloMillossovichChen2018}, where both financial risk and mortality/longevity risk are considered. Our emphasis is in the spirit of~\cite{JensenSchomacker2015} and concerns the specific challenges that universally arise from including event risk via multi-state modeling. This places our research as part of the literature on multi-state modeling for with-profit life insurance. For a non-technical introduction to multi-state modeling in life insurance, see~\cite{Koller2012}.

For Danish with-profit products, the investment strategy and dividend strategy are to a great extent controlled by the insurer, and practitioners have traditionally determined the market value of bonus payments residually, cf.~\cite[][Chapter 2]{mollersteffensen}. This is achieved by considering the available assets together with the market value of guaranteed payments and imposing the equivalence principle on the market basis in combination with certain ad hoc adjustments. In case parts of the assets correspond to expected future profits, also denoted as contractual service margins in the IFRS 17 regulatory framework, cf.~\cite{ifrs17}, the equivalence principle is invalidated, which points to more sophisticated computational methods. The provision of these kinds of methods constitutes the main contribution of this paper.

We distinguish between dividends and bonuses:
The development of the portfolio and the financial market typically gives rise to a surplus,
which is distributed among the policyholders' policies via dividend yields according to the chosen dividend strategy.
After this allocation, the accumulated dividends on each policy are paid out according to a specific bonus scheme,
and we refer to these extra payments as the bonus payments.
The study of systematic surplus, dividends and bonus payments in multi-state with-profit life insurance goes back to \cite{Ramlau1991,NorbergBonus2,NorbergBonus},
where one finds careful definitions of various concepts of surplus,
discussions of general principles for its redistribution,
and the introduction of forecasting techniques in a so-called Markov chain interest model,
see also~\cite{Norberg1995}.
In~\cite{Steffensen2006}, partial differential equations for market values of so-called predetermined payments and bonus payments are derived in a Black-Scholes model. 

The projection of bonus payments in multi-state life insurance and the computation of associated market values has recently received renewed attention, see~\cite{JensenSchomacker2015,Jensen2016,Lollike2020,Falden2020}. In~\cite{Jensen2016}, the focus is on projection of bonus payments conditionally on the insured sojourning in a specific state; this approach targets e.g.\ product design and bonus prognosis from the perspective of the insured rather than market valuation. Conversely, the paper~\cite{JensenSchomacker2015} also deals with projection of bonus payments but on a portfolio level, which ensures computational feasibility but does not shed light on the full complexity of multi-state with-profit life insurance. Although with-profit life insurance focuses on the portfolio of insured and although decisions by the insurer (so-called future management actions), including possible determination of dividend yields, often depend mainly on the performance of said portfolio, one ought to take into account that bonus payments in principle are  allocated to the individual insured. This is the starting point in~\cite{Lollike2020}, where the focus is on deriving differential equations for relevant retrospective reserves given a dividend strategy (used to buy additional benefits) that depends in an affine manner on the reserves themselves. The process governing the state of the insured is assumed Markovian. In~\cite{Falden2020}, the results of~\cite{Lollike2020} are extended to allow for policyholder behavior, namely the options of surrender and free policy conversion. The surrender option allows the policyholder to cancel all future payments and instead receive a single payment corresponding in some sense to the value of the contract, while the free policy option allows the policyholder to cancel future premiums at the cost of reducing future benefits. In~\cite{Lollike2020,Falden2020}, the dependence of the dividend strategy on the financial state of the insurance business, encapsulated in what we below shall term the \textit{shape} of the insurance business, and the practical and computational challenges arising from this are not highlighted nor studied. This paper derives its main novelty value from addressing these challenges within a multi-state framework while also allowing for financial risk.

In this paper, we derive methods for the computation of market values of bonus payments in a Markovian multi-state model for a financial market consisting of one risky asset in addition to a bank account governed by a potentially stochastic interest rate. The insurance risk and financial risk are assumed independent. We include incidental modeling of the policyholder options surrender and free policy conversion following~\cite{henriksen2014,buchardt2015,BuchardtMollerSchmidt}. In regards to dividends and bonus, we adopt a somewhat universal dividend strategy; examples throughout the text, including the numerical example, provide links to the preexisting literature and actuarial practice by focusing on dividends arising via a second order interest rate, which in particular gives rise to an interest rate guarantee. The allocated dividends are later paid out according to a bonus scheme, and we here focus solely on the bonus scheme known as \textit{additional benefits}, where dividends are simply used to buy extra benefits; this bonus scheme constitutes the focal point of~\cite[][Chapter 6]{mollersteffensen} and is also quite common in practice.

In practice, the dividend strategy depends on product design, regulatory frameworks, and decisions made by the insurer. In this paper, we assume that the dividend strategy is explicitly computable based on the following information: the past realization of financial risk, the current individual insurance risk (state of insured and time since free policy conversion), the current  shape of the insurance business, and the number of additional benefits currently held. Furthermore, the dividend strategy must be affine in the number of additional benefits. The shape of the insurance business consists first and foremost of so-called portfolio-wide means, cf.~\cite[][Chapter 6]{mollersteffensen}, which reflect on a portfolio level the current financial state of the insurance business and thus are relevant to the insurer in determining the dividend strategy and investment strategy. In particular, the shape of the insurance business includes portfolio-wide means of the technical reserve of guaranteed payments and of the expected accumulated guaranteed cash flows. Consequently, the shape of the insurance business depends on the dividend strategy, which again depends on the shape of the insurance business.

Using classic techniques, we derive a system of differential and integral equations for the computation of the expected accumulated bonus cash flows conditionally on the realization of financial risk. This allows us to formulate a procedure for the computation of the market value of bonus payments which efficiently combines simulation of financial risk with classic methods for the remaining insurance risk. By not needing to simulate insurance risk, our procedure has a significant advantage compared to full-blown Monte Carlo methods. We identify the special case where the number of additional benefits depend only on financial risk -- the \textit{state-independent} case -- and show how this significantly simplifies the numerical procedure. It is our impression that the state-independent model is aligned to current actuarial practice, where it might e.g.\ serve as an approximation for valuation on a portfolio level. This is further examined in a numerical example.

We should like to stress that while our results are subject to important technical regularity conditions, it is the general methodology and conceptual ideas that constitute the main contributions of this paper. Furthermore, our concepts, methods, and results are targeted academics and actuarial practitioners alike, and, consequently, we aim at keeping the presentation at a reasonable technical level.

The paper is structured as follows. In Section~\ref{sec:setup}, we present the setup. The general results and general numerical procedure are given in Section~\ref{sec:projection}, while the state-independent case is the subject of Section~\ref{sec:state_indep_projection}. Section~\ref{sec:num_ex} is devoted to a numerical example. Finally, Section~\ref{sec:outlook} concludes with a comparison to recent advances in the literature and a discussion of possible extensions.

\section{Setup}\label{sec:setup}

In the following, we describe the mathematical framework. Subsections~\ref{subsec:prelim}--\ref{subsec:unit_val} introduce the processes governing the financial market, the state of the insured, and the insu\-rance payments, and we discuss the valuation of so-called predetermined payments. The\linebreak dividend and bonus scheme is described in Subsection~\ref{subsec:div_bon}, which leads to a specification of the total payment stream as a sum of predetermined payments and bonus payments. Contrary to the predetermined payments, the bonus payments depend on the development of the financial market, which adds an extra layer of complexity to the valuation problem. The focal point of this paper is to establish explicit methods for the computation of the market value of the bonus payments; a precise description of this problem is given in Subsection~\ref{subsec:liabs}. In the remainder of the paper, the problem is studied for a specific class of dividend processes specified in Subsection~\ref{subsec:shape_ctrl}.

A background probability space $(\Omega,\mathbb{F},\pr)$ is taken as given. Unless explicitly stated or evident from the specific context, all statements are in an almost sure sense w.r.t.\ $\pr$. The probability measure $\pr$ relates to market valuation and therefore corresponds to some risk neutral probability measure. Due to the presence of insurance risk, the market is not complete, which implies that the risk neutral probability measure is not unique. Since we shall assume financial risk and insurance risk to be independent, one can think of the probability measure $\pr$ as the product measure of some risk neutral probability measure for financial risk and some probability measure for insurance risk. 

\subsection{Preliminaries} \label{subsec:prelim}

The state of the insured is governed by a non-explosive jump process  $Z=\tuborg{Z(t)}_{t\geq 0}$ on a finite state space $\J$ with deterministic initial state $Z(0)\equiv z_0 \in \J$. Denote by $N$ the corresponding multivariate counting process with components $N_{jk}=\tuborg{N_{jk}(t)}_{t\geq 0}$ for $j,k\in \J,  k\neq j$ given by
\begin{equation*}
N_{jk}(t)=\# \tuborg{s\in (0,t] : Z(s-)=j, \ Z(s)=k}.
\end{equation*}
Let $S_1=\{S_1(t)\}_{t\geq 0}$ be the price process for some risky asset (diffusion process, in particular continuous) and let $r=\{r(t)\}_{t\geq 0}$ be a suitably regular short rate process with corresponding bank account $S_0(t) = S_0(0)\exp\left(\int_0^t r(v)\md v\right)$, $S_0(0) \equiv s_0 > 0$, and suitably regular forward interest rates $f(t,\cdot)$, $t\geq 0$, satisfying
\begin{align*}
\E\!\left[\left. e^{-\int_t^T r(s) \md s} \, \right| \mathcal{F}^S(t)\right]
=
e^{-\int_t^T f(t,s) \md s}
\end{align*}
for all $0\leq t<T$ as well as $f(t,t) = r(t)$ for all $t\geq 0$; here $\mathcal{F}^S$ is the natural filtration generated by $S=(S_0,S_1)$, which exactly represents available market information. The available insurance information is represented by the filtration $\mathcal{F}^Z$ naturally generated by $Z$, and the total information available is represented by the filtration $\mathcal{F} = \mathcal{F}^S \vee \mathcal{F}^Z$ naturally generated by $(S,Z)$.

To allow for free policy behavior and surrender, we suppose the state space $\J$ can be decomposed as
\begin{align*}
\J = \Jp\cup \Jf,
\end{align*}
with $\Jp := \{0,\ldots,J\}$ and $\Jf := \{J+1,\ldots,2J+1\}$ for some $J\in\mathbb{N}$. Here $\Jp$ contains the premium paying states, while $\Jf$ contains the free policy states, and transition to $\{J\}$ and $\{2J+1\}$ corresponds to surrender as premium paying and free policy, respectively, cf.~\cite{buchardt2015,BuchardtMollerSchmidt}.\ We suppose that $\Jf$ is absorbing and can only be reached via a transition from $\{0\}$ to $\{J+1\}$, $\{J\}$ and $\{2J+1\}$ are absorbing, and that $\{J\}$ and $\{2J+1\}$ can only be reached from $\{0\}$ and $\{J+1\}$, respectively. The setup is depicted in Figure~\ref{fig:extmm}.
\begin{figure}[h!]
\centering
\scalebox{0.9}{
\begin{tikzpicture}[node distance=4em, auto] 
	\node[punkt] (g) {$0$};
	\node[right = of  g] (dummy) {};
	\node[text width=2em,minimum height=1em, text centered, below=2em of dummy] (i) {$\cdots$};
	\node[punkt, right=of dummy] (i1) {$i$};
	\node[punkt, below = 3em of i] (dead) {$J-1$}; 
	\path 	($(g.south)+(0.4,-0.2)$) edge [pildotted, bend right=10] ($(i.west)+(-0.1,0)$)
        	($(i.west)+(0,0.0)$) edge [pildotted, bend right=10]  ($(g.south)+(0.5,-0.2)$)
		($(i1.south)$) edge [pil, bend right=10] (dead.north east)
        	($(dead.north east)+(0.1,0)$) edge [pil, bend right=10]  ($(i1.south)+(0.1,0)$)
		($(i.east)+(0.1,0)$) edge [pildotted, bend right=10] ($(i1.south)+(-0.4,-0.2)$)
        	($(i1.south)+(-0.6,-0.2)$) edge [pildotted, bend right=10]  ($(i.east)+(-0.1,0)$)
		($(dead.north)$) edge [pildotted, bend right = 10] ($(i.south)$)
		($(i.south)+(-0.1,0)$) edge [pildotted, bend right = 10] ($(dead.north)+(-.1,0)$)
		($(dead.north west)$) edge [pil, bend right = 10] ($(g.south)$)
		($(g.south)+(-0.1,0)$) edge [pil, bend right = 10] ($(dead.north west)+(-0.1,0)$)
		($(g.east)+(0,-0.1)$) edge [pil, bend right = 10] ($(i1.west)+(0,-0.1)$)
		($(i1.west)$) edge [pil, bend right = 10] ($(g.east)$)
	;
	\node[right=10pt] at ($(i1.south)+(0,-2.4)$) {$\mathcal{J}^\mathrm{p}$};
	
	\node[punkt, left = 5em of g] (J) {$J$ \\ Surrender};
	\path	(g.west) edge [pil] (J.east);

	\draw[thick, dotted] ($(J.north west)+(-0.8,0.8)$) rectangle ($(i1.south east)+(0.8,-3.5)$);

	\node[punkt, below = 13 em of g] (J1) {$J+1$};
	\node[right = of J1] (dummy2) {};
	\node[text width=2em,minimum height=1em, text centered, below= 2em of dummy2] (Ji) {$\cdots$};
	\node[punkt, right = of dummy2] (Ji1) {$J+1+i$};
	\node[punkt, below = 3em of Ji] (2J) {$2J$};	
	\path ($(g.south)+(-.2,0)$) edge [pil] ($(J1.north)+(-.2,0)$);	 
	\path 	($(J1.south)+(0.4,-0.2)$) edge [pildotted, bend right=10] ($(Ji.west)+(-0.1,0)$)
        	($(Ji.west)+(.0,0)$) edge [pildotted, bend right=10]  ($(J1.south)+(0.5,-0.2)$)
		($(Ji1.south)$) edge [pil, bend right=10] (2J.north east)
        	($(2J.north east)+(.1,0)$) edge [pil, bend right=10]  ($(Ji1.south)+(.1,0)$)
		($(Ji.east)+(0.1,0)$) edge [pildotted, bend right=10] ($(Ji1.south)+(-0.4,-0.2)$)
        	($(Ji1.south)+(-0.6,-0.2)$) edge [pildotted, bend right=10]  ($(Ji.east)+(-0.1,0)$)
		($(2J.north)$) edge [pildotted, bend right = 10] ($(Ji.south)$)
		($(Ji.south)+(-0.1,0)$) edge [pildotted, bend right = 10] ($(2J.north)+(-0.1,0)$)
		($(2J.north west)$) edge [pil, bend right = 10] ($(J1.south)$)
		($(J1.south)+(-0.1,0)$) edge [pil, bend right = 10] ($(2J.north west)+(-0.1,0)$)
		($(J1.east)+(0,-0.1)$) edge [pil, bend right = 10] ($(Ji1.west)+(0,-0.1)$)
		($(Ji1.west)$) edge [pil, bend right = 10] ($(J1.east)$)
	;
	\node[right=10pt] at ($(Ji1.south)+(0,-2.4)$) {$\mathcal{J}^{\mathrm{f}}$};

	\node[punkt, left = 5em of J1] (2J1) {$2J+1$ \\ Surrender \\ as free policy};
	\path	(J1.west) edge [pil] (2J1.east);
	\draw[thick, dotted] ($(2J1.north west)+(-0.8,0.5)$) rectangle ($(Ji1.south east)+(0.8,-3.5)$);
\end{tikzpicture}}
\caption{General finite state space extended with a surrender state $\{J\}$ and free policy states $\Jf$.\ The states $\Jp\setminus\tuborg{J}$ contain the biometric states of the insured, e.g.\ active, disabled, and dead.\ The states $\mathcal{J}^\mathrm{f}$ are a copy of $\mathcal{J}^p$, and a transition from $\{0\}$ to $\{J+1\}$ corresponds to a free policy conversion. A transition to $\{J\}$ or $\{2J+1\}$ corresponds to a surrender of the policy.}
\label{fig:extmm}
\end{figure}
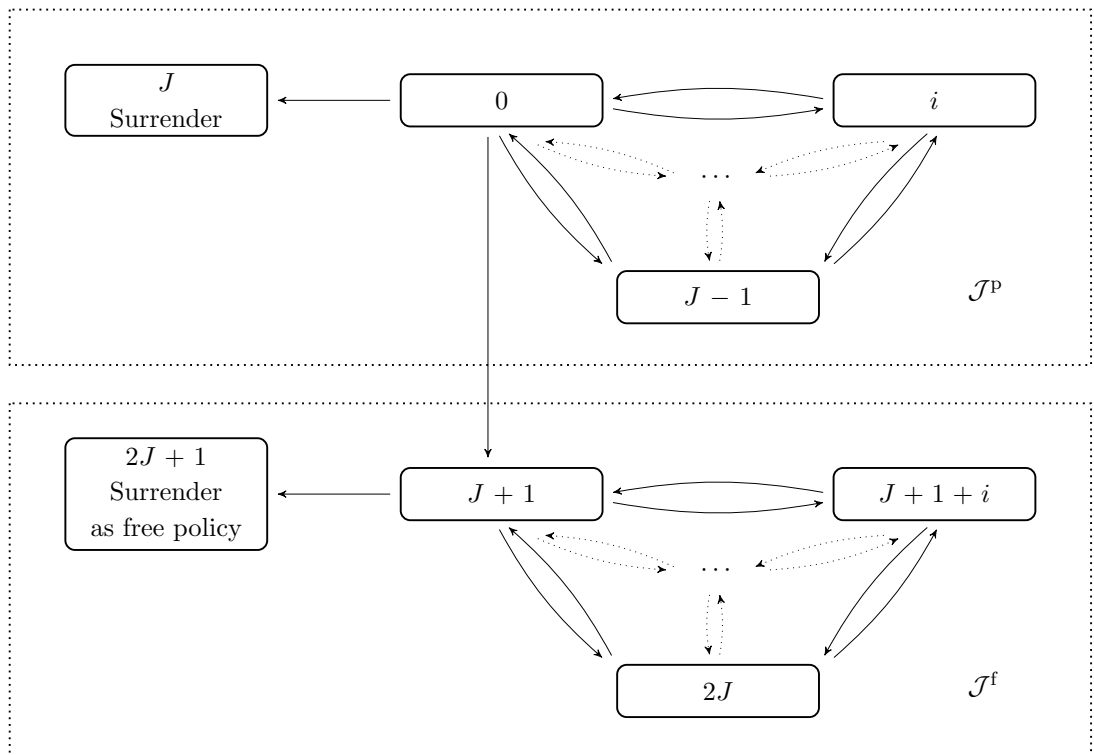

\subsection{Life insurance contract with policyholder options} \label{subsec:contract}
The life insurance contract is described by a payment stream $B=\tuborg{B(t)}_{t\geq0}$\linebreak giving accumulated benefits less premiums. It consists of \textit{predetermined payments}\linebreak $B^\circ=\tuborg{B^\circ(t)}_{0\leq t \leq n}$, stipulated from the beginning of the contract, and additional bonus payments determined when market and insurance information are realized during the course of the contract; details regarding the latter are given in later subsections.

We specify the predetermined payments as in~\cite{buchardt2015,BuchardtMollerSchmidt}. For simplicity, we suppose that the predetermined payments regarding the classic states $\Jp$ consist of suitably regular deterministic sojourn payment rates $b_j$ and transition payments $b_{jk}$; in particular, surrender results in a deterministic payment. In the free policy states, no premiums are paid and the benefit payments are reduced by a factor $\rho \in [0,1]$ depending on the time of free policy conversion. In rigorous terms, we have
\begin{align*}
\md B^\circ(t)
&=
\md B^{\circ,\text{p}}(t)
+
\rho(\tau)\md B^{\circ,\text{f}}(t), \quad &&B^\circ(0)=0, \\
\md B^{\circ,\text{p}}(t)
&=
\sum_{j\in \Jp} \mathds{1}_{(Z(t-)=j)}\bigg(b_j(t)\md t+\sum_{k\in \Jp\atop k \neq j} b_{jk}(t) \md N_{jk}(t)\bigg), \quad &&B^{\circ,\text{p}}(0)=0, \\
\md B^{\circ,\text{f}}(t)
&=
b_{(J+1)(2J+1)}(t) \md N_{(J+1)(2J+1)}(t) \\
&\quad+ 
\sum_{j\in \Jf} \mathds{1}_{(Z(t-)=j)}\bigg(b_{j'}(t)^+\md t+ \hspace{-5mm}\sum_{k\in \Jf\setminus\{2J+1\}\atop k \neq j} \hspace{-5mm} b_{j'k'}(t)^+ \md N_{jk}(t)\bigg), \quad &&B^{\circ,\text{f}}(0)=0, 
\end{align*}
with $\Jf \ni j \mapsto j' := j-(J+1)$, $x^+:=\max\{0,x\}$, and $x^- := -\min\{0,x\}$ (for $x\in\mathbb{R}$), and where $\tau$ is the time of free policy conversion given by
\begin{align*}
\tau = \inf\{t \in [0,\infty) : Z(t) \in \Jf\}.
\end{align*}
We have $\tau = 0$ if and only if $z_0 \in \Jf$; in this case, the policy is initially a free policy. Without loss of generality we thus let $\rho(0) = 1$. Furthermore, we suppose there are no sojourn payments in the surrender states, i.e.\ $b_J \equiv 0$. 

For later purposes, we also define a benefit payment stream $B^{\circ,+}$ mimicking $B^{\circ,\text{f}}$, but extended to all of $\mathcal{J}$:
\begin{align*}
B^{\circ,+}(t)
&=
B^{\circ,\text{p},+}(t)
+
B^{\circ,\text{f}}(t), \\
\md B^{\circ,\text{p},+}(t)
&=
b_{(J+1)(2J+1)}(t) \md N_{0J}(t) \\
&\quad+ 
\sum_{j\in \Jp} \mathds{1}_{(Z(t-)=j)}\bigg(b_j(t)^+\md t+ \hspace{-3mm}\sum_{k\in \Jp\setminus\{J\}\atop k \neq j} \hspace{-3mm} b_{jk}(t)^+ \md N_{jk}(t)\bigg), \quad B^{\circ,\text{p},+}(0)=0. 
\end{align*}
In the following, we assume the existence of a maximal contract time $n\in(0,\infty)$ in the sense that all sojourn payment rates and transition payments, including those of the unit bonus payment stream, cf.~Subsection~\ref{subsec:div_bon}, are zero for $t>n$.

\subsection{Valuation of the predetermined payments} \label{subsec:unit_val}

The life insurance contract is written on the \textit{technical basis}, also called the first order basis, which is, at least originally, designed to consist of prudent assumptions on financial risk and insurance risk. The technical basis is modeled via another probability measure $\pr^\star$ under which the short rate process $r^\star$ is deterministic and suitably regular, while $Z$ is independent of $S$ and Markovian with suitably regular transition rates $\mu^\star$. The assumptions regarding absorption, as illustrated in Figure~\ref{fig:extmm}, are retained under $\pr^\star$. Policyholder behavior is typically not explicitly included on the technical basis; in particular, there is no change in technical reserve upon free policy conversion. Following along the lines of~\cite{buchardt2015,BuchardtMollerSchmidt}, this is consistent with the assumption that the transition rates under the technical basis, the surrender payments, and the free policy factor take the form
\begin{align}\label{eq:tech_spec}
\begin{split}
&\mu_{jk}^\star = \mu_{j'k'}^\star, \hspace{30mm} j,k \in \Jf, k \neq j, \\
&b_{0J} = \widetilde{V}_0^\star, \\
&b_{(J+1)(2J+1)} = \widetilde{V}^{\star,+}_0, \\
&(0,\infty) \ni t \mapsto \rho(t) =
\frac{
\widetilde{V}^\star_0(t)}
{\widetilde{V}^{\star,+}_0(t)},
\end{split}
\end{align}
where for $j \in \Jp\setminus\{J\}$ the state-wise technical reserve $\widetilde{V}^\star_j$ of predetermined payments and the corresponding valuation of benefits only $\widetilde{V}^{\star,+}$ are given by
\begin{align}\label{eq:TechnicalTildeCirc}
\widetilde{V}^\star_j(t) &= \E^{\star}\!\!\left[\left. \int_t^n e^{-\int_t^s r^\star(v)\md v} \md B^{\circ}(s)\, \right|  Z(t) = j\right]\!, \\
\widetilde{V}^{\star,+}_j(t) &= \E^{\star}\!\!\left[\left. \int_t^n e^{-\int_t^s r^\star(v)\md v} \md B^{\circ,+}(s)\, \right|  Z(t) = j\right]\!,
\end{align}
with $\E^\star$ denoting integration w.r.t.\ $\pr^\star$. The specification of the technical reserve is circular, since the payments depend on the technical reserve, which again depends on the payments.The setup described here is closely related to the generic situation described in \cite[][Section 4.3]{ChristiansenDjehiche2020}, where circularity is resolved by a careful construction of the involved processes and using backward stochastic differential techniques; for instance, our specification~\eqref{eq:tech_spec} ensures that equation (4.8) in~\cite{ChristiansenDjehiche2020} holds.
 
It is possible to show that the state-wise technical reserves of predetermined payments satisfy the following differential equations of Thiele type:
\begin{align}\label{eq:thiele_tech}
\frac{\md}{\md t} \widetilde{V}^\star_j(t) = r^\star(t) \widetilde{V}^\star_j(t) - b_j(t) -\hspace{-3mm} \sum_{k \in \Jp\setminus\{J\}\atop k \neq j} \hspace{-3mm} \big(b_{jk}(t) + \widetilde{V}^\star_k(t) - \widetilde{V}^\star_j(t)\big) \mu_{jk}^\star(t), \hspace{2mm} \widetilde{V}^\star_j(n) = 0,
\end{align}
for $j \in \Jp\setminus\{J\}$. By adding $+$'s as superscripts, one finds an identical system of differential equations concerning the valuation of benefits only.

We are now ready to define the technical reserve of predetermined payments denoted $V^{\star,\circ}$. First, for the purpose of bonus allocation, the definitions of state-wise reserves of predetermined payments are naturally extended from $j \in \Jp\setminus\{J\}$ to $j \in \J$ via
\begin{align} \label{eq:TechnicalUnit}
V_j^{\star,\circ}(t)
=
\begin{cases} 
\widetilde{V}^\star_j(t) & \text{ if } j \in \Jp\setminus\{J\}, \\
\rho(\tau) \widetilde{V}^{\star,+}_{j'}(t) & \text{ if } j \in \Jf\setminus\tuborg{2J+1}, \\
0 & \text{ if } j \in \tuborg{J, 2J+1}.
\end{cases}
\end{align}

Note that $V_j^{\star,\circ}$ depends on $\tau$ in the free policy states, thus being stochastic, while it is deterministic in the premium paying states. The technical reserve of predetermined payments $V^{\star,\circ}$ is now defined according to $V^{\star,\circ}(t)=V^{\star,\circ}_{Z(t)}(t)$. In particular, $V^{\star,\circ}(t)$ is a version of the conditional expectation
\begin{align*}
\E^\star\!\left[\left. \int_t^n e^{-\int_t^s r(u) \md u} \, \md B^\circ(s) \, \right| \mathcal{F}(t)
\right]\!.
\end{align*}
In practice, the technical reserves are exactly computed according to~\eqref{eq:thiele_tech} and~\eqref{eq:TechnicalUnit}. Here we provide a probablistic setup and specification that is consistent with this approach.

We now turn our attention to valuation under the market basis modeled via $\pr$. Here we assume that $Z$ and $S$ are independent and that $Z$ is Markovian with suitably regular transition rates $\mu$. The market reserve $V^\circ$ of predetermined payments is then given by
\begin{align}\label{eq:Vcirc}
V^\circ(t)
&=
\E\!\left[\left. \int_t^n e^{-\int_t^s r(u) \md u} \, \md B^\circ(s) \, \right| \mathcal{F}(t)
\right] = \int_t^n e^{-\int_t^s f(t,u)\md u} A^{\circ}(t,\mathrm{d}s),
\end{align}
with $A^\circ$ the so-called expected accumulated predetermined cash flows given by
\begin{align}\label{eq:unit_cf_def}
A^\circ(t,s)
=
\E\!\left[\left. B^\circ(s) - B^\circ(t) \, \right| \mathcal{F}^Z(t)\right].
\end{align}
Denote with $p$ the transition probabilities of $Z$ under $\pr$. Following~\cite{buchardt2015, BuchardtMollerSchmidt}, on $(Z(t) \in \Jf)$,
\begin{align}\label{eq:UnitCFjf}
\begin{split}
A^\circ(t,\mathrm{d}s)
=
\rho(\tau)\Bigg(&p_{Z(t)(J+1)}(t,s)\widetilde{V}^{\star,+}_0(s) \mu_{(J+1)(2J+1)}(s) \md s \\
&+
\sum_{j\in\Jf} p_{Z(t)j}(t,s)\bigg(b_{j'}(s)^+ + \hspace{-5mm} \sum_{k\in\Jf\setminus\{2J+1\}\atop k \neq j} \hspace{-5mm} b_{j'k'}(s)^+ \mu_{jk}(s)\bigg)\mathrm{d}s\Bigg),
\end{split}
\end{align}
while on $(Z(t)\in\Jp)$, 
\begin{align}\label{eq:UnitCFjp}
\begin{split}
A^\circ(t,\mathrm{d}s)
=& \,
\sum_{j\in\Jp} p_{Z(t)j}(t,s)\bigg(b_j(s) + \sum_{k\in\Jp\atop k \neq j} b_{jk}(s) \mu_{jk}(s)\bigg)\mathrm{d}s \\
&+p^\rho_{Z(t)(J+1)}(t,s)\widetilde{V}^{\star,+}_0(s) \mu_{(J+1)(2J+1)}(s) \md s \\
&+\sum_{j\in\Jf} p^\rho_{Z(t)j}(t,s)\bigg(b_{j'}(s)^+ + \hspace{-5mm} \sum_{k\in\Jf\setminus\{2J+1\}\atop k \neq j} \hspace{-5mm} b_{j'k'}(s)^+ \mu_{jk}(s)\bigg)\mathrm{d}s
\end{split}
\end{align}
where the so-called $\rho$-modified transition probabilities $p_{jk}^\rho$, $j\in\Jp$ and $k\in\J$, are defined by $p_{jk}^\rho(t,s) = \E[\mathds{1}_{(Z(s)=k)} \rho(\tau)^{\mathds{1}_{(\tau\leq s)}}\, | \, Z(t) = j]$ and satisfy for $k\in \Jf$ so-called $\rho$-modified versions of Kolmogorov's forward differential equations:
\begin{align}\label{eq:kol_mod}
\begin{split}
\deriv{}{s} p_{jk}^\rho(t,s)
&=
 \sum_{\ell \in \Jf\atop \ell \neq k} p^\rho_{j\ell}(t,s) \mu_{\ell k}(s) + \mathds{1}_{(k=J+1)}p_{j0}(t,s)\mu_{0k}(s)\rho(s) - p_{jk}^\rho(t,s) \mu_{k\bullet }(s), \\
p_{jk}^\rho(t,t) &= 0,
\end{split}
\end{align}
while $p_{jk}^\rho(t,s) = p_{jk}(t,s)$ for $k\in \Jp$.

\subsection{Dividends and bonus}\label{subsec:div_bon}

With premiums determined by the principle of equivalence based on the prudent technical basis,  the portfolio creates a systematic surplus if everything goes well. This surplus mainly belongs to the insured and is to be paid back in the form of dividends. Following \cite{NorbergBonus2, NorbergBonus}, we let $D=\tuborg{D(t)}_{t\geq 0}$ denote the accumulated dividends, and we suppose it only consists of absolutely continuous dividend yields:
\begin{equation*}
 \md D(t)=\delta(t)\md t, \quad D(0)=0,
 \end{equation*} 
where $\delta=\tuborg{\delta(t)}_{t\geq 0}$ is suitably regular and $\mathcal{F}$-adapted. In Subsection~\ref{subsec:shape_ctrl}, we specify the dividend strategy further. Classic examples include dividends distributed via a second order interest rate, see Example~\ref{ex:rente} below.

The dividends are allocated to the individual life insurance contract but not yet paid out; payout occurs at a possibly later point in time according to some specific bonus scheme. In the following, we adopt the bonus scheme known as \textit{additional benefits}, that is we suppose that the dividends are used as a premium to buy additional benefits on the technical basis corresponding to a so-called unit bonus payment stream $B^\dagger$ that only consists of benefits and thus is unaffected by the free policy option. It is given by
\begin{align*}
\md B^\dagger(t) &= \sum_{j\in \J} \mathds{1}_{(Z(t-)=j)}\bigg(b_j^\dagger(t)\md t + \sum_{k\in \J \atop k\neq j} b_{jk}^\dagger(t)\md N_{jk}(t)\bigg), \quad B^\dagger(0)=0, 
\end{align*} 
where the payment functions in the premium paying states $\Jp$, $b_j^\dagger$ and $b_{jk}^\dagger$, are suitably regular non-negative deterministic functions with $b_J^\dagger\equiv0$, while
\begin{align*}
&b^{\dagger}_j =b^{\dagger}_{j'} \quad \text{and} \quad b^{\dagger}_{jk} =b^{\dagger}_{j' k'}, \hspace{10mm} j,k \in \Jf, k \neq j, \\
&b_{0J}^{\dagger} = \widetilde{V}^{\star,\dagger}_0,
\end{align*}
where for $j\in\Jp\setminus\{J\}$ we denote by $\widetilde{V}_j^{\star,\dagger}$ the state-wise technical unit reserves of $B^\dagger$ given as \eqref{eq:TechnicalTildeCirc} with $B^\circ$ replaced by $B^\dagger$.\ Again, these state-wise technical reserves satisfy differential equations of Thiele type, namely~\eqref{eq:thiele_tech} with added superscripts $\dagger$.

For the purpose of bonus allocation, the state-wise technical unit reserves are naturally extended from $j \in \Jp\setminus\{J\}$ to $j \in \J$ via
\begin{align}\label{eq:TechnicalTilde}
V_j^{\star,\dagger}(t)
=
\begin{cases} 
\widetilde{V}^{\star,\dagger}_{j}(t) & \text{ if } j \in \Jp\setminus\{J\}, \\
\widetilde{V}^{\star,\dagger}_{j'}(t) & \text{ if } j \in \Jf\setminus\tuborg{2J+1}, \\
0 & \text{ if } j \in \tuborg{J, 2J+1},
\end{cases}
\end{align}
when the technical value of the additional benefits $V^{\star,\dagger}$ reads $V^{\star,\dagger}(t) = V_{Z(t)}^{\star,\dagger}(t)$.

The expected accumulated unit bonus cash flows $A^{\dagger}$ of $B^\dagger$ on the market basis can be found analogously to $A^\circ$ and read
\begin{align}\label{eq:CFtilde}
A^{\dagger}(t, \md s) &= a^{\dagger}(t,s) \md s, \\ \label{eq:CFtilde_a}
a^{\dagger}(t,s) &= \sum_{j\in \J}p_{Z(t)j}(t,s)\bigg(b^\dagger_j(s)+\sum_{k\in \J\atop k\neq j} b^\dagger_{jk}(s)\mu_{jk}(s)\bigg).
\end{align}
The state-wise counterparts are denoted $A_i^{\dagger}$ and $a_i^{\dagger}$, $i \in \J$. They satisfy $A_{Z(t)}^{\dagger}(t,\md s) = a_{Z(t)}^{\dagger}(t,s) \md s = a^{\dagger}(t,s) \md s = A^{\dagger}(t,\md s)$ by taking the form
\begin{align}\label{eq:CFtilde_state}
A_i^{\dagger}(t, \md s) &= a_i^{\dagger}(t,s) \md s, \\
\label{eq:CFtilde_state_a}
a_i^{\dagger}(t,s) &=
\sum_{j\in \J}p_{ij}(t,s)\bigg(b^\dagger_j(s)+\sum_{k\in \J\atop k\neq j} b^\dagger_{jk}(s)\mu_{jk}(s)\bigg).
\end{align}
The market unit reserve of $B^\dagger$ denoted $V^\dagger$ is given analogously to~\eqref{eq:Vcirc} but with the superscript $\dagger$ replacing $\circ$. Note that contrary to the market reserve of predetermined payments, it may be taken to only depend on $Z(t)$ at time $t$, since this is the case for the corresponding expected accumulated unit bonus cash flow. We write $V^\dagger_{Z(t)}(t)$ in place of $V^\dagger(t)$ to highlight this fact.

Having specified the unit bonus payment stream $B^\dagger$ as well as its technical value and associated expected accumulated cash flows, we are now in a position to finalize the specification of the bonus scheme. Thus, let $Q(t)$ denote the number of additional benefits, that is the number of benefit streams $B^\dagger$, held at time $t$. Since $\delta$ is used as a premium to buy $B^\dagger$ on the technical basis, we have that (cf.\ equation (4.11) in~\cite{NorbergBonus})
\begin{align}\label{eq:Q}
\md Q(t)
=
\frac{\mathrm{d}D(t)}
{V_{Z(t)}^{\star,\dagger}(t)}
=
\frac{\delta(t)}
{V_{Z(t)}^{\star,\dagger}(t)}
\md t, \quad Q(0)=0. 
\end{align}
Imposing this bonus mechanism, the total payment stream consisting of both predetermined payments and bonus payments is given by
\begin{align}\label{eq:B}
\md B(t)
&=
\md B^{\circ}(t)
+
Q(t)\md B^\dagger(t), \quad B(0) = 0.
\end{align}
In this paper, we implicitly think of $Q$ as weakly increasing, although this is not a mathematical requirement. This way of thinking is reflected in the terminology. Along these lines, we define the payment process $B^g$ by
\begin{align}\label{eq:Bg}
B^g(t,\mathrm{d}s)&=\mathrm{d} B^\circ(s)+Q(t)\md B^\dagger(s), \quad B^g(t,t)=B(t),
\end{align}
and refer to it as the \textit{payments guaranteed at time $t\geq0$}, while the remaining payments 
\begin{align*}
    \left(Q(s)-Q(t)\right)\!\md B^\dagger(s)
\end{align*}
are referred to as \textit{bonus (payments)}.

The number of unit bonus payment streams $B^\dagger$ held increases according to~\eqref{eq:Q}, where $V^{\star, \dagger}_{Z(t)}(t)$ is the price (on the technical basis) at time $t$ of a unit bonus payment stream. Thus, at time $t$, the guaranteed payment stream originating from bonus is increased to $s \mapsto Q(t)\md B^\dagger(s)$. The technical reserve of the contract, which is formally defined in~\eqref{eq:technical} below, is the value on the technical basis of the combined benefits given by~\eqref{eq:Bg}. By construction, this technical reserve increases exactly by the increase in accumulated dividends, whereas this is typically not the case for the market value; the market value is formally defined in~\eqref{eq:Vg} and~\eqref{eq:CF} below.

In the remainder of the paper, we focus on valuation of the payment stream \eqref{eq:B}, in particular the bonus payments. We assume that $Q$ exists and is suitably regular, so that the technical arguments in the remainder of the paper are legitimate. This is an implicit condition that must be checked for any specific model.

\subsection{Liabilities} \label{subsec:liabs}

Thinking of time zero as now, the present life insurance liabilities of the insurer are described by the market value of the total payment stream $B$ evaluated at time zero:
\begin{align*}
V(0)=\E\!\left[\int_0^n e^{-\int_0^t r(v)\md v}\md B(t)\right]\!.
\end{align*}
By \eqref{eq:B}, this amounts to market valuation of the predetermined payments and bonus payments. Thus $V(0) = V^\circ(0)+V^b(0)$ where $V^\circ(0)$ is given by \eqref{eq:Vcirc} and  
\begin{align}\label{eq:Vb0}
V^b(0)=\E\!\left[\int_0^n e^{-\int_0^t r(v)\md v} Q(t)\md B^\dagger(t)\right]\!.
\end{align}
is the time zero market value of bonus payments.

\begin{remark}\label{rmk:Qneq0}
By setting $Q(0)=0$, we think of time zero as the time of initialization of the insurance contract. To determine the market value of bonus payments after initialization of the contract, one could extend the filtration $\mathcal{F}$ to include additional information at time zero and consider a general $\mathcal{F}(0)$-adapted $Q(0)$. This extension is straightforward and achieved by focusing on $Q(\cdot)-Q(0)$ rather than $Q(\cdot)$, and thus the requirement $Q(0)=0$ is only really made for notational convenience. \demoo
\end{remark}

There exist well-established methods to calculate $V^\circ(0)$ explicitly using the expected accumulated cash flows of predetermined payments on the market basis from \eqref{eq:UnitCFjf}--\eqref{eq:UnitCFjp}; in particular, this computation does not depend on the dividend strategy $\delta$ nor further realizations of the financial market (only the forward rate curve $f(0,\cdot)$ is required). On the contrary, the time zero market value of bonus payments $V^b(0)$ \textit{does} depend on the strategy $\delta$. Due to possibly non-linear path dependencies regarding both the financial and biometric/behavioral scenarios, this implies that classic computational methods via ($\rho$-modified) Kolmogorov's forward differential equations are not applicable.

The focal point of the paper is to establish methods to calculate the market value of bonus payments $V^b(0)$. We consider an approach that combines simulations of the financial market with analytical methods and numerical methods for differential equations in regards to calculations involving the state of the insured. Everything else being equal, this approach should be numerically superior to a pure simulation approach for which one would simulate both the financial market and the state of the insured. To formalize the main idea, we define what we shall term $Q$-modified transition probabilities (at time $0$) for $j\in \J$ by
\begin{align}\label{def:pQ}
p_{z_0j}^Q(0,t) = \E\!\left[\left. Q(t)\mathds{1}_{(Z(t)=j)} \, \right|  \mathcal{F}^{S}(t)\right]
\end{align}
for all $t\geq 0$. We immediately have the following result:
\begin{proposition}\label{prop:FDB}
The time zero market value of bonus payments is given by 
\begin{align}\label{eq:FDB}
V^b(0) =& \, \E\!\left[\int_0^n e^{-\int_0^t r(v)\md v} A^b(0,\mathrm{d}t) \right]\!, \\ \label{eq:bonusCF}
A^b(0,\mathrm{d}t) =& \, a^b(0,t) \md t, \\
\label{eq:bonusCFa}
a^b(0,t) :=& \, \sum_{j\in \J}p_{z_0j}^Q (0,t)\Big(b^\dagger_j(t)+\sum_{k\in \J \atop k\neq j} b^\dagger_{jk}(t)\mu_{jk}(t)\Big).
\end{align}
Furthermore, if $Q$ is adapted to $\mathcal{F}^S$, then
\begin{align}\label{eq:pQ_split}
p_{z_0j}^Q(0,t) &= Q(t)p_{z_0j}(0,t), \\
\label{eq:BonusCF_stateIndep}
a^b(0,t) &= Q(t)a^\dagger(0,t).
\end{align}
\end{proposition}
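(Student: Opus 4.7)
The plan is to start from the definition \eqref{eq:Vb0} and condition on the financial filtration in order to reduce the expression to one involving only the $Q$-modified transition probabilities. Since the discount factor $e^{-\int_0^t r(v)\md v}$ is $\mathcal{F}^S(t)$-measurable, an application of the tower property together with Fubini (justified by the implicit regularity assumptions on $Q$ and $B^\dagger$) yields
\begin{align*}
V^b(0) = \int_0^n \E\!\left[ e^{-\int_0^t r(v)\md v}\, \E\!\left[\left. Q(t)\md B^\dagger(t)\,\right|\mathcal{F}^S(t)\right]\right]\!.
\end{align*}
Expanding $\md B^\dagger$ according to its definition, the inner conditional expectation splits into a sojourn part and a jump part. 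For the sojourn part, the relation $\E[Q(t)\mathds{1}_{(Z(t-)=j)}\,|\,\mathcal{F}^S(t)] = p^Q_{z_0 j}(0,t)$ follows from \eqref{def:pQ} together with the fact that $Z(t-)=Z(t)$ almost surely.

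For the jump part, I would first observe that because $Z$ and $S$ are independent under $\pr$, the $\mathcal{F}^Z$-compensator of $N_{jk}$, namely $\int_0^\cdot \mathds{1}_{(Z(s-)=j)}\mu_{jk}(s)\md s$, remains its compensator with respect to the full filtration $\mathcal{F}$. Hence for the predictable integrand $t\mapsto e^{-\int_0^t r(v)\md v}\,Q(t)\,b^\dagger_{jk}(t)$,
\begin{align*}
\E\!\left[\int_0^n e^{-\int_0^t r(v)\md v} Q(t) b^\dagger_{jk}(t)\md N_{jk}(t)\right]
=
\E\!\left[\int_0^n e^{-\int_0^t r(v)\md v} Q(t)\mathds{1}_{(Z(t-)=j)} b^\dagger_{jk}(t) \mu_{jk}(t)\md t\right]\!,
\end{align*}
and another application of the tower property yields a factor $p^Q_{z_0 j}(0,t)$. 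Combining the two contributions and pulling the outer financial expectation out via Fubini delivers \eqref{eq:FDB}--\eqref{eq:bonusCFa}.

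For the state-independent case, I would argue directly from the definition: if $Q$ is $\mathcal{F}^S$-adapted, then $Q(t)$ can be pulled out of the conditional expectation in \eqref{def:pQ}, leaving $\E[\mathds{1}_{(Z(t)=j)}\,|\,\mathcal{F}^S(t)]$, which by independence of $Z$ and $S$ equals $p_{z_0 j}(0,t)$. This gives \eqref{eq:pQ_split}, and substitution into \eqref{eq:bonusCFa} together with \eqref{eq:CFtilde_a} yields \eqref{eq:BonusCF_stateIndep}. The only real obstacle is the rigorous handling of the compensator argument in the full filtration; everything else is book-keeping and an appeal to independence.
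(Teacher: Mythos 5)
Your proposal is correct and follows essentially the same route as the paper: both arguments rest on the compensator/martingale property of $N_{jk}$ with the predictable integrand involving $Q$ (which the paper justifies by noting $Q$ is continuous and adapted, hence predictable), followed by the tower property and Fubini to produce the $Q$-modified transition probabilities, and the state-independent case is handled identically by pulling $Q(t)$ out of the conditional expectation in \eqref{def:pQ}. The only cosmetic difference is that you condition on $\mathcal{F}^S(t)$ before compensating the jump part and then revert to the unconditional expectation, whereas the paper compensates first and conditions afterwards; your explicit remark that the $\mathcal{F}^Z$-compensator remains valid in the enlarged filtration by independence is a welcome clarification of what the paper leaves implicit.
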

\begin{proof}
Since $\tuborg{Q(t)}_{t\geq 0}$ is continuous and adapted, it is predictable. Using martingale techniques, in particular that
\begin{align*}
Q(t) \md N_{jk}(t) - Q(t) \mathds{1}_{(Z_{t-} = j)} \mu_{jk}(t) \md t
\end{align*}
defines a martingale, we find that
\begin{align*}
V^b(0) &= \E\!\Bigg[ \int_0^n e^{-\int_0^t r(v)\md v} \sum_{j\in \J}Q(t)\mathds{1}_{(Z(t-)=j)}\bigg(b_j^\dagger(t) + \sum_{k\in \J \atop k\neq j} b_{jk}^\dagger(t)\mu_{jk}(t)\bigg) \!\md t \Bigg]\!.
\end{align*}
Since there is almost surely at most a finite number of transitions on each compact time interval, we may replace $\mathds{1}_{(Z(t-)=j)}$ by $\mathds{1}_{(Z(t)=j)}$. Using the law of iterated expectations and Fubini's theorem, we conclude that
  \begin{align*}
V^b(0)&=\E\!\Bigg[ \int_0^n e^{-\int_0^t r(v)\md v} \sum_{j\in \J}\E\!\left[\left. \mathds{1}_{(Z(t)=j)}Q(t)\, \right| \mathcal{F}^{S}(t)\right]\bigg(b_j^\dagger(t) + \sum_{k\in \J \atop k\neq j} b_{jk}^\dagger(t)\mu_{jk}(t)\bigg)\!\md t \Bigg] \\    
&= \E\!\Bigg[\int_0^n e^{-\int_0^t r(v)\md v} \sum_{j\in \J} p_{z_0j}^Q(0,t) \bigg(b_j^\dagger(t) + \sum_{k\in \J \atop k\neq j} b_{jk}^\dagger(t)\mu_{jk}(t)\bigg)\!\md t\Bigg] \\
&=\E\!\Bigg[\int_0^n e^{-\int_0^t r(v)\md v} a^b(0,t) \md t\Bigg]\!.
\end{align*}
Furthermore, if $Q$ is $\mathcal{F}^S$-adapted, then the $Q$-modified transition probabilities satisfy
\begin{equation*}
p_{z_0j}^Q(0,t) =  \E\!\left[\left. \mathds{1}_{(Z(t)=j)}Q(t)\, \right| \mathcal{F}^{S}(t)\right] = Q(t)p_{z_0j}(0,t),
\end{equation*}
and thus $a^b(0,t) = Q(t)a^\dagger(0,t)$, cf.~\eqref{eq:CFtilde_a}. 
\end{proof}

\begin{remark}\label{rmk:approx_expression}
Note that by casting $Q$ according to~\eqref{eq:Q}, by interchanging the order of integration, and by using the law of iterated expectations, it is possible to derive the following alternative formula for the time zero market value of bonus payments:
\begin{align*}
V^b(0) = \int_0^n \E\!\left[e^{-\int_0^s r(v)\md v} \delta(s) \frac{V^\dagger_{Z(s)}(s)}{V_{Z(s)}^{\star,\dagger}(s)}\right] \mathrm{d}s.
\end{align*}
From this expression, we see how the time zero market value of bonus payments consists of an accumulation of time zero market values of additional benefits bought at different points in time. When we later compare different scenario-based projection models, this representation of the time zero value of the bonus payments turns out to be quite useful, cf.\ Example~\ref{ex:second_order_int}.\demoo
\end{remark}

Since the so-called expected accumulated bonus cash flow $A^b(0,\cdot)$ is $\mathcal{F}^S$-adapted, the result provides a representation of $V^b(0)$ motivating a computational scheme based on simulation of the financial market. For each simulated financial scenario, we should compute $A^b(0,\cdot)$ explicitly in each scenario, which in general requires computation of of $p^Q_{z_0j}(0,\cdot)$ for all $j\in \J$; this we study in Section \ref{sec:projection}. In the special case where $Q$ is $\mathcal{F}^S$-adapted, it holds that $p^Q_{z_0j}(0,\cdot) = Q(\cdot) p_{z_0j}(0,\cdot)$, and the problem simplifies to a direct calculation of $Q$ that does not involve the biometric/behavioral states, and can essentially be solved by a classic computation of the expected accumulated cash flow $A^\dagger(0,\cdot)$ via Kolmogorov's forward differential equations; this is studied in Section \ref{sec:state_indep_projection}.

As mentioned above, the computation of the expected accumulated bonus cash flow depends on the actual specification of the dividend strategy $\delta$ during the course of the contract, and in practice, this strategy is a control variable that depends on what we refer to as the shape of the insurance business. In the following subsection, we formalize the shape of the insurance business and its corresponding controls, which leads to a specification of a class of dividend strategies.

\subsection{Shape and controls} \label{subsec:shape_ctrl}

We now introduce the shape of the insurance business consisting of key quantities on a portfolio level that the insurer needs at future time points to determine the controls, i.e.\ the dividend strategy and the investment strategy. We only introduce a few key financial indicators, but we believe that our general methodology allows for the implementation of additional shape variables.

To describe the shape of the insurance business, we first consider the liabilities, specifically the technical value and the market value of guaranteed payments on a portfolio level. Recall that the payments $B^g(t,\cdot)$ guaranteed at time $t\geq 0$ take the form~\eqref{eq:Bg}. The market value of guaranteed payments $V^g$ is thus given by 
\begin{align} \label{eq:Vg}
V^g(t)&= \E\!\left[\left. \int_t^n e^{-\int_t^s r(v)\md v} B^g(t,\mathrm{d}s) \, \right| \mathcal{F}(t)\right] =  \int_t^n e^{-\int_t^s f(t,v)\md v} A^g(t, \md s), 
\end{align}
with $A^g$ denoting the expected accumulated guaranteed cash flows, 
\begin{align}
\label{eq:CF}
A^g(t, \mathrm{d}s) &= A^\circ(t, \mathrm{d}s) + Q(t)A^{\dagger}_{Z(t)}(t, \mathrm{d}s).
\end{align}
Similarly, the technical reserve of guaranteed payments is given by 
\begin{align}\label{eq:technical}
 V^\star(t) = V^{\star,\circ}(t)+Q(t)V_{Z(t)}^{\star,\dagger}(t).
\end{align}
The so-called \textit{portfolio-wide means} of $V^\star$, $A^g$, and $V^g$ are now obtained by averaging out the unsystematic insurance risk by applying the law of large numbers w.r.t.\ a collection of independent and comparable insured in the portfolio, see e.g.\ the discussions in~\cite[][Chapter 6]{mollersteffensen} and~\cite{norberg1991}. The portfolio-wide means take the form
\begin{align*}
\bar{A}^g(t,s)
:=& \,
 \E\!\left[\left. A^g(t,s)\, \right| \mathcal{F}^S(t)\right]\!, \qquad t \leq s < \infty\\
\bar{V}^g(t) := &\, \E\!\left[\left. V^g(t)\, \right| \mathcal{F}^S(t)\right]\!, \\
\bar{V}^\star(t) :=&\, \E\!\left[\left. V^\star(t)\, \right| \mathcal{F}^S(t)\right]
\end{align*}
for $t\geq 0$. From~\eqref{eq:Vg} we find that $\bar{V}^g(t)$ may be obtained from $\bar{A}^g(t,\mathrm{d}s)$ via the equation
\begin{align}\label{eq:vgbar}
\bar{V}^g(t) = \int_t^n e^{-\int_t^s f(t,v)\md v} \bar{A}^g(t,\mathrm{d}s).
\end{align}
Thus it suffices to consider only $\bar{A}^g$ and $\bar{V}^\star$.

The portfolio-wide means represent values of liabilities under the assumption that the insurance portfolio is of such a size that unsystematic insurance risk can be disregarded. It corresponds to what is often referred to as mean-field approximations in the literature. In Subsection \ref{subsec:proj}, we show how to compute these.

We now turn our attention to the assets. They are described by a portfolio of $S$ which is self-financed by  the premium less benefits that the portfolio of insured pays to the insurer. We denote the value process by $U=\tuborg{U(t)}_{t\geq 0}$. We think of this process as the assets for the whole portfolio, but in our presentation the payments involved are only the contributions of a single insured. Since an individual insured pays $-\md B(t)$ to the insurer, this contribution to the total payments of the portfolio can be  represented by the expected cash flow $-\big(A^\circ(0,\mathrm{d}t)+A^b(0,\mathrm{d}t)\big)$. Thus we let $U$ take the form
\begin{align*}
\md U(t) = \theta(t)\md S_0(t) + \eta(t)\md S_1(t) - \big(A^\circ(0,\mathrm{d}t)+A^b(0,\mathrm{d}t)\big), \quad U(0) \equiv u_0, 
\end{align*}
where $\left(\theta, \eta\right) = \left(\theta(t), \eta(t) \right)_{t\geq 0}$ is a suitably regular $\mathcal{F}^S$-adapted investment strategy. We think of $\eta$ as a control variable for the insurer, since the number of units invested into the bank account is determined residually by $\theta(t) = (U(t)-\eta(t)S_1(t))/S_0(t)$. This gives
\begin{align}\label{eq:dU}
\mathrm{d}U(t)=r(t)\!\left(U(t)-\eta(t)S_1(t)\right)\!\!\md t + \eta(t)\md S_1(t) - \big(A^\circ(0,\mathrm{d}t)+A^b(0,\mathrm{d}t)\big).
\end{align}
In this paper, we only consider a single insured and the portfolio-wide mean reserves re\-present the contribution of this insured to the shape of the insurance business. To take into account multiple independent insured, one can consider $Z(0)$ as stochastic with distribution corresponding to the empirical distribution of initial states in the portfolio. The latter can be described by weights $w_j$ with the $j$'th weigth giving the proportion of insured that are initially in state $j\in\mathcal{J}$. The corresponding portfolio-wide means would in this case read
\begin{align*}
\sum_{j\in\mathcal{J}} w_j \E_j\!\left[\left. A^g(t,s)\, \right| \mathcal{F}^S(t)\right] \qquad \text{and} \qquad
\sum_{j\in\mathcal{J}} w_j \E_j\!\left[\left. V^\star(t)\, \right| \mathcal{F}^S(t)\right]\!,
\end{align*}
where $\E_j$ corresponds to expectation under the assumption that $Z(0) \equiv j$. Additionally, the insured typically belong to different cohorts implying that e.g.\ the transition rates and payment processes differ among insured. This is handled in a similar way. Also, the same considerations apply to the payments affecting the value process $U$. We consider these kinds of extensions from a single insured to a whole portfolio straightforward and do not give them further attention in the remainder of the paper.

Let $S(\cdot \wedge t)=\{S(u)\}_{0\leq u \leq t}$. We can now make the concepts of shape and controls precise.
\begin{definition}\label{def:state}
The \underline{shape of the insurance business} $\mathcal{I}$ is the triplet
\begin{align*}
\mathcal{I} = \left\{U(t),\bar{A}^g(t,\mathrm{d}s), \bar{V}^\star(t)\right\}_{t\geq 0},
\end{align*}
while the \underline{controls} are the pair $(\delta,\eta)=\left\{\delta(t), \eta(t)\right\}_{t\geq 0}$.
\end{definition}

\begin{assumption} \label{assump:controls}
We suppose that $(\delta,\eta)$ are chosen such that the setting is well-specified in the sense that $Q$ exists and is suitably regular. Furthermore, we assume that $\eta$ takes the form
\begin{align}\label{eq:eta}
\eta(t) = \eta\!\left(t,S(\cdot \wedge t), \mathcal{I}(t)\right)
\end{align}
for some explicitly computable and suitably regular deterministic mapping $\eta$, and we assume that $\delta$ takes the form
\begin{align} \label{eq:delta}
\begin{split}
\delta(t) = &\,\,\delta_0\left(t,S(\cdot \wedge t), Z(t), \mathcal{I}(t)\right) \\ &+
\delta_1\left(t,S(\cdot \wedge t), Z(t), \mathcal{I}(t)\right)\rho(\tau)^{\mathds{1}_{(\tau\leq t)}} 
\\ &+ \delta_2\left(t,S(\cdot \wedge t), Z(t), \mathcal{I}(t)\right) Q(t),
\end{split}
\end{align}
for some suitably regular deterministic mappings $\delta_0$, $\delta_1$ and $\delta_2$ that we are able to compute explicitly.
\end{assumption}

\begin{remark}\label{eq:Qneq0_delta}
In Remark~\ref{rmk:Qneq0} we discussed the extension to general $Q(0)$ and the idea of focusing on $Q(\cdot)-Q(0)$. By rewriting~\eqref{eq:delta} in the following manner,
\begin{align*}
\delta(t) = &\,\,\delta_0\left(t,S(\cdot \wedge t), Z(t), \mathcal{I}(t)\right) +  \delta_2\left(t,S(\cdot \wedge t), Z(t), \mathcal{I}(t)\right)Q(0)   \\ &+
\delta_1\left(t,S(\cdot \wedge t), Z(t), \mathcal{I}(t)\right)\rho(\tau)^{\mathds{1}_{(\tau\leq t)}} 
\\ &+ \delta_2\left(t,S(\cdot \wedge t), Z(t), \mathcal{I}(t)\right) (Q(t)-Q(0)),
\end{align*}
we see how this idea would manifest itself in relation to Assumption~\ref{assump:controls}. \demoo
\end{remark}

In the following, we also use the shorthand notations $t \mapsto \delta_i(t,Z(t))$, $i=0,1,2$, which only highlights $\mathcal{F}^Z$-measurable quantities.

The assumption that the controls depend only on portfolio-wide means rather than actual realizations of the balance sheet and the assets is the key choice of this paper. The risk we hereby account for is only the systematic risk, i.e.\ the risk that affects all insured.

Note that it is the assumption of $\delta$ being dependent on $U$ that makes $\eta$ a process that affects the payments to the insured, thus justifying it as a control. Note also that we allow $\delta$ to depend on $Z$, $\tau$, and $Q$, while this is not the case for $\eta$. This is since the dividends are allocated to the individual insured while the assets are a portfolio level quantity. The specific affine structure on $\delta$ mirrors that of $B$, cf.~\eqref{eq:B}. This is important for practical applications, as the following example highlights.

\begin{example}[Second order interest rate] \label{ex:rente}
The technical reserve $V^\star$ from \eqref{eq:technical} accumulates with the first order interest rate $r^\star$.\  Dividends may  then arise by accumulating the technical reserve with a second order interest rate $r^\delta$ that is  continuously readjusted based on the shape of the insurance business. This is obtained by letting   
\begin{align}\label{eq:delta_ex}
\delta(t) &= \big(r^\delta(t)-r^\star(t)\big)V^\star(t), \\ \label{eq:r_delta_phi}
r^\delta(t) &= \Phi(t,S(\cdot \wedge t), \mathcal{I}(t)), 
\end{align}
for some explicitly computable and suitably regular mapping $\Phi$.\ This corresponds to setting
\begin{align*}
\delta_0(t,j) &= \big(r^\delta(t)-r^\star(t)\big)1_{(j\in \Jp\setminus\tuborg{J})}\widetilde{V}_j^{\star}(t), \\
\delta_1(t,j) &= \big(r^\delta(t)-r^\star(t)\big)1_{(j\in \Jf\setminus\tuborg{2J+1})}\widetilde{V}_{j'}^{\star,+}(t) \\
\delta_2(t,j) &= \big(r^\delta(t)-r^\star(t)\big)V_j^{\star,\dagger}(t), 
\end{align*}
for all $j\in \J$. In case of a survival model, various specifications of second order interest rates is among the focal points of~\cite{mollersteffensen}. In discrete time, second order interest rates are also employed in~\cite{Bacinello2001}, where $r^\delta$ is taken to be the maximum of the technical interest rate $r^\star$ and a proportion (the so-called participation level) of the return on some reference portfolio. In the numerical example of Section~\ref{sec:num_ex}, we consider the following special case of $r^\delta$:  
\begin{align}\label{eq:r_delta_ex}
r^\delta(t) &= r^\star(t) + \frac{\kappa{\left(U(t)-\max\tuborg{\bar{V}^\star(t), \bar{V}^g(t)}\right)}^{\!+}}{\bar{V}^\star(t)}, \quad \kappa \in [0,1],
\end{align}      
where $\kappa$ is the share of some measure of excess assets used to buy additional benefits. This choice of $r^\delta$, where $r^\delta > r^\star$, in particular leads to a dividend design which also constitutes a minimum interest rate guarantee of $r^\star$. In case of a survival model, this type of $r^\delta$ also appears in e.g.~\cite[][Chapter 4.5.3]{mollersteffensen}.
\demo
\end{example}

The aim of this paper is to develop methods to compute the market value of bonus payments $V^b(0)$. Recall from Proposition~\ref{prop:FDB} that this can be done via the computation of the expected accumulated bonus cash flow $A^b(0,\cdot)$, which depends on the financial market through $Q$. To achieve this within the setup of Assumption~\ref{assump:controls}, we adopt a simulation approach. It follows from~\eqref{eq:Q} that for a simulated financial scenario, i.e.\ a realization of the whole path of $S$, we need the shape of the insurance business $\mathcal{I}(t) = (U(t),\bar{A}^g(t,\mathrm{d}s), \bar{V}^\star(t))$ and corresponding controls $(\delta(t),\eta(t))$ for all time points $t\geq 0$. In other words, starting today from time zero, we must \textit{project} the shape of the insurance business and the controls into future time points for each simulated financial scenario.

In the following sections, we formulate our \textit{scenario-based projection models} demonstrating how to project the shape of the insurance business in a specific financial scenario, and how to apply these projections to calculate the expected accumulated bonus cash flow $A^b(0,\cdot)$. Section~\ref{sec:projection} concerns the general case where $Q$ is allowed to be $\mathcal{F}^Z \vee \mathcal{F}^S$-adapted and where we apply \eqref{eq:bonusCF}--\eqref{eq:bonusCFa}. In the subsequent Section~\ref{sec:state_indep_projection} we specialize to $Q$ being state independent (of $Z$), i.e.\ $\mathcal{F}^S$-adapted, where we instead can apply the simpler formula~\eqref{eq:BonusCF_stateIndep}.

\section{Scenario-based projection model}\label{sec:projection} 
This section contains the main contributions of the paper and provides the foundation for the special case in Section~\ref{sec:state_indep_projection}. In Subsection~\ref{subsec:proj}, we formulate our general scenario-based projection model demonstrating how to project the shape of the insurance business into future time points in a given financial scenario. The projections are then in Subsection~\ref{subsec:q_mod} used to calculate the $Q$-modified transition probabilities $p_{z_0j}^Q(0,\cdot)$ and corresponding expected accumulated bonus cash flow $A^b(0,\cdot)$. Based on this, we present in Subsection~\ref{subsec:proj_implement} a procedure for the computation of $V^b(0)$ via an application of Proposition~\ref{prop:FDB}.

As noted in Proposition \ref{prop:FDB}, we are able to simplify calculations of $A^b(0,\cdot)$ to what we coin \textit{state-independent} calculations of $Q$ and $p$ if $Q$ is assumed $\mathcal{F}^S$-adapted. This special case leads to a notion of a \textit{state-independent scenario-based projection model}, which is studied in more details in Section~\ref{sec:state_indep_projection}.

\subsection{Projecting the shape}\label{subsec:proj}

We now turn our attention to projection of the shape of the insurance business. This consists of computation of $\mathcal{I}=(U,\bar{A}^g,\bar{V}^\star)$ for realizations of $S$, where each realization exactly represents a simulated financial scenario.

The method for computation of $U$ for a realization of $S$ follows immediately from the dynamics of the assets according to~\eqref{eq:dU}. The computational issue reduces to that of computing $p^Q_{z_0j}(0,\cdot)$, cf.~\eqref{eq:bonusCF}--\eqref{eq:bonusCFa} and~\eqref{eq:dU}. Thus we focus on the projection of the portfolio-wide means $\bar{A}^g$ and $\bar{V}^\star$.

First, we consider the portfolio-wide means $\bar{A}^g$ of the expected accumulated guaranteed cash flows $A^g$.
\begin{proposition}\label{prop:projmarket}
The portfolio-wide means $\bar{A}^g$ of the expected accumulated guaranteed cash flows $A^g$ read
\begin{align*}
\bar{A}^g(t,\mathrm{d}s)=A^\circ(0,\mathrm{d}s)+\sum_{j\in \J}p^Q_{z_0j}(0,t)A^\dagger_j(t,\mathrm{d}s)
\end{align*}
for all $t \geq 0$. 
\end{proposition}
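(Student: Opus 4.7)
The plan is to start from the definition $\bar{A}^g(t,\mathrm{d}s) = \E[A^g(t,\mathrm{d}s)\,|\,\mathcal{F}^S(t)]$ together with the decomposition \eqref{eq:CF}, namely $A^g(t,\mathrm{d}s) = A^\circ(t,\mathrm{d}s) + Q(t)\,A^\dagger_{Z(t)}(t,\mathrm{d}s)$, and then compute the conditional expectation of each of the two summands separately.

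For the first summand, the idea is to exploit that $A^\circ(t,\cdot)$ is $\mathcal{F}^Z(t)$-measurable (by \eqref{eq:unit_cf_def}) and that $Z$ and $S$ are independent under $\pr$, so that conditioning on $\mathcal{F}^S(t)$ reduces to taking an unconditional expectation. By the tower property, $\E[A^\circ(t,s)] = \E[B^\circ(s) - B^\circ(t)] = A^\circ(0,s) - A^\circ(0,t)$, where the last equality uses $Z(0)\equiv z_0$ and $B^\circ(0)=0$. Passing to the differential in $s$ with $t$ held fixed yields $\E[A^\circ(t,\mathrm{d}s)\,|\,\mathcal{F}^S(t)] = A^\circ(0,\mathrm{d}s)$ for $s>t$.

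For the second summand, the key observation is that, by \eqref{eq:CFtilde_state}--\eqref{eq:CFtilde_state_a}, the quantity $A^\dagger_j(t,\mathrm{d}s)$ is a deterministic function (it depends only on the market transition probabilities and payment functions), so it can be pulled out of the conditional expectation once we split across the value of $Z(t)$. Writing
\begin{align*}
Q(t)\,A^\dagger_{Z(t)}(t,\mathrm{d}s) = \sum_{j\in\J} \mathds{1}_{(Z(t)=j)}\,Q(t)\,A^\dagger_j(t,\mathrm{d}s),
\end{align*}
taking $\E[\,\cdot\,|\,\mathcal{F}^S(t)]$ and invoking the definition \eqref{def:pQ} of the $Q$-modified transition probabilities, I obtain $\sum_{j\in\J} A^\dagger_j(t,\mathrm{d}s)\,p^Q_{z_0j}(0,t)$. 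Summing the two contributions gives the stated identity.

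The main obstacle, such as it is, is the careful justification of the manipulation for the first summand: one has to be a bit careful that $\E[A^\circ(t,\mathrm{d}s)\,|\,\mathcal{F}^S(t)] = A^\circ(0,\mathrm{d}s)$ is understood as an identity between measures in $s$ (with $t$ fixed), which follows from the almost-sure non-decreasingness of $s \mapsto A^\circ(t,s)$ and Fubini/monotone class arguments; everything else is a clean application of independence of $Z$ and $S$ and of the deterministic nature of $A^\dagger_j$.
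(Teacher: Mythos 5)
Your proposal is correct and follows essentially the same route as the paper's proof: decompose via \eqref{eq:CF}, use independence of $Z$ and $S$ plus the tower property to reduce $\E[A^\circ(t,s)\,|\,\mathcal{F}^S(t)]$ to $A^\circ(0,s)$ minus an $s$-independent term that vanishes in the differential, and handle the bonus term by splitting over $Z(t)=j$ and invoking the definition \eqref{def:pQ} of $p^Q_{z_0j}$. The only cosmetic difference is that you identify the residual term as $A^\circ(0,t)$ whereas the paper leaves it as $\E[B^\circ(t)-B^\circ(0)]$; both observations rely on the term not depending on $s$.
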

\begin{proof}
By~\eqref{eq:CF}, \eqref{def:pQ}, and due to the assumed independence between $Z$ and $S$, we immediately find that
\begin{align*}
\bar{A}^g(t,s)  
&=
\E\!\left[\left. A^\circ(t,s) \, \right| \mathcal{F}^S(t)\right]  + \sum_{j \in \J}  \E\!\left[\left. \mathds{1}_{(Z(t) = j)} Q(t)A^{\dagger}_{Z(t)}(t, s) \, \right| \mathcal{F}^S(t)\right]  \\
&=
\E\!\left[A^\circ(t,s) \right] + \sum_{j \in \J} p^Q_{z_0j}(0,t)A^\dagger_j(t,s).
\end{align*}
By~\eqref{eq:unit_cf_def} and the iterated law of expectations,
\begin{align*}
\E\!\left[A^\circ(t,s) \right]
&=
\E\!\left[B^\circ(s) - B^\circ(t) \right] \\
&=
A^\circ(0,s) - \E\!\left[B^\circ(t) - B^\circ(0) \right]\!.
\end{align*}
Since the latter term does not depend on $s$, we find that
\begin{align*}
\bar{A}^g(t,\md s)
=
A^\circ(0,\md s) + \sum_{j \in \J} p^Q_{z_0j}(0,t)A^\dagger_j(t,\md s)
\end{align*}
as desired.
\end{proof}
Consequently, given $A^\circ$ and $A^\dagger$ the computational issue has been reduced to that of computing the $Q$-modified transition probabilities $p^Q_{z_0j}(0,\cdot)$.

Next we consider the portfolio-wide mean of the technical reserve of guaranteed payments, $\bar{V}^\star$. We could follow the same approach above and calculate the technical reserves via expected (accumulated) cash flows, however, since the technical interest rate is deterministic, a range of technical reserves, including $V^{\star,\dagger}$, $\widetilde{V}^{\star}$, and $\widetilde{V}^{\star,+}$, can be computed more efficiently by solving the differential equations of Thiele type derived from~\eqref{eq:thiele_tech}, cf.~Subsection~\ref{subsec:unit_val} and Subsection~\ref{subsec:div_bon}.

Denote by $\bar{V}^{\star,\circ}$ the portfolio-wide mean technical reserves of predetermined payments given by
\begin{align*}
\bar{V}^{\star,\circ}(t) = \E\!\left[\left. V^{\star,\circ}(t)\, \right| \mathcal{F}^S(t)\right]
\end{align*}
for $t\geq 0$. Since $Z$ and $S$ are assumed independent, we could replace the conditional expectation by an ordinary expectation.

\begin{proposition}\label{prop:projtechnical}
The portfolio-wide mean technical reserve of guaranteed payments reads
\begin{align*}
\bar{V}^\star(t) &= \bar{V}^{\star,\circ}(t) + \sum_{j\in \J}p^Q_{z_0j}(0,t)V^{\star,\dagger}_{j}(t),
\end{align*}
while the portfolio-wide mean technical reserve of predetermined payments reads
\begin{align}\label{eq:projtechnicalunit}
\bar{V}^{\star,\circ}(t) &=\sum_{j\in \Jp\atop j\neq J}p_{z_0j}(0,t)\widetilde{V}_j^{\star}(t)+\sum_{j\in \Jf \atop j\neq 2J+1}p^\rho_{z_0j}(0,t)\widetilde{V}_{j'}^{\star,+}(t).
\end{align}
\end{proposition}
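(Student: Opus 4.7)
The plan is to mimic the style of Proposition~\ref{prop:FDB} and decompose the technical reserve using~\eqref{eq:technical}, conditioning on $\mathcal{F}^S(t)$. Starting from
\begin{align*}
V^\star(t) = V^{\star,\circ}(t) + Q(t)V^{\star,\dagger}_{Z(t)}(t),
\end{align*}
I would apply $\E[\,\cdot\,|\mathcal{F}^S(t)]$ to both sides. The first summand yields $\bar{V}^{\star,\circ}(t)$ directly from its definition. For the second summand, I would write $V^{\star,\dagger}_{Z(t)}(t) = \sum_{j\in\J} \mathds{1}_{(Z(t)=j)} V^{\star,\dagger}_j(t)$ and observe that, by the extension~\eqref{eq:TechnicalTilde}, each $V^{\star,\dagger}_j(t)$ is a deterministic function of $t$ (hence $\mathcal{F}^S(t)$-measurable). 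Thus it can be pulled out of the conditional expectation, and the definition~\eqref{def:pQ} of the $Q$-modified transition probabilities gives
\begin{align*}
\E\!\left[\left. Q(t)V^{\star,\dagger}_{Z(t)}(t)\, \right| \mathcal{F}^S(t)\right] = \sum_{j\in\J} p^Q_{z_0j}(0,t) V^{\star,\dagger}_j(t),
\end{align*}
which establishes the first claim.

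For the second identity, I would use the definition~\eqref{eq:TechnicalUnit} to write
\begin{align*}
V^{\star,\circ}(t) = \sum_{j\in\Jp\setminus\{J\}} \mathds{1}_{(Z(t)=j)} \widetilde{V}^\star_j(t) + \sum_{j\in\Jf\setminus\{2J+1\}} \mathds{1}_{(Z(t)=j)}\rho(\tau) \widetilde{V}^{\star,+}_{j'}(t),
\end{align*}
noting that the contribution from $\{J,2J+1\}$ vanishes. Because $Z$ and $S$ are independent, the conditional expectation given $\mathcal{F}^S(t)$ equals an ordinary expectation. The premium-paying sum then produces $\sum_{j\in\Jp\setminus\{J\}} p_{z_0j}(0,t) \widetilde{V}^\star_j(t)$.

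The main subtlety lies in handling the free policy states. Here I would exploit the fact that $\{Z(t)\in\Jf\}\subseteq\{\tau\leq t\}$, since $\Jf$ can only be reached through a free policy conversion that must have occurred at or before $t$. Consequently, on the event $\{Z(t)=j\}$ with $j\in\Jf$,
\begin{align*}
\mathds{1}_{(Z(t)=j)}\rho(\tau) = \mathds{1}_{(Z(t)=j)}\rho(\tau)^{\mathds{1}_{(\tau\leq t)}},
\end{align*}
so the definition $p_{z_0j}^\rho(0,t) = \E[\mathds{1}_{(Z(t)=j)}\rho(\tau)^{\mathds{1}_{(\tau\leq t)}}]$ applies and delivers the $\rho$-modified term. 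Combining the two sums yields~\eqref{eq:projtechnicalunit}. The only real obstacle is this careful identification of $\rho(\tau)$ with $\rho(\tau)^{\mathds{1}_{(\tau\leq t)}}$ on the relevant event; once that is observed, both claims follow by routine algebra.
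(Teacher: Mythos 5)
Your proposal is correct and follows essentially the same route as the paper's proof: decompose via \eqref{eq:technical} and \eqref{eq:TechnicalUnit}, condition on $\mathcal{F}^S(t)$, pull out the deterministic state-wise reserves, and invoke the definitions of $p^Q_{z_0j}$ and $p^\rho_{z_0j}$. Your explicit observation that $\mathds{1}_{(Z(t)=j)}\rho(\tau) = \mathds{1}_{(Z(t)=j)}\rho(\tau)^{\mathds{1}_{(\tau\leq t)}}$ for $j\in\Jf$ is left implicit in the paper but is the right justification for matching the $\rho$-modified transition probabilities.
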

\begin{proof}
By~\eqref{eq:technical} and \eqref{def:pQ}, direct calculations yield
\begin{align*}
\bar{V}^\star(t)
&= \E\!\left[\left. V^{\star,\circ}(t) \, \right| \mathcal{F}^S(t) \right] + \sum_{j\in \J}\E\!\left[\left. \mathds{1}_{(Z(t) = j)} Q(t) V_{Z(t)}^{\star,\dagger}(t)\, \right| \mathcal{F}^S(t) \right] \\
&=  \bar{V}^{\star, \circ}(t) + \sum_{j\in \J}p_{z_0j}^Q(0,t) V_j^{\star,\dagger}(t).
\end{align*}
To obtain~\eqref{eq:projtechnicalunit}, we split $V^{\star,\circ}$ according to the events of $Z(t)$ being in $\Jp\setminus\tuborg{J}$, $\Jf\setminus\tuborg{2J+1}$, and $\{J,2J+1\}$. According to~\eqref{eq:TechnicalUnit}, we then have
\begin{align*}
\bar{V}^{\star,\circ}(t) &=  \E\!\left[\left. \mathds{1}_{\left(Z(t)\in \Jp\setminus\tuborg{J}\right)}\widetilde{V}_{Z(t)}^{\star}(t) + \mathds{1}_{\left(Z(t)\in \Jf\setminus\tuborg{2J+1}\right)}\rho(\tau)\widetilde{V}_{Z(t)'}^{\star,+}(t) \, \right| \mathcal{F}^S(t)\right] \\
&=\E\!\Bigg[ \sum_{j\in \Jp\atop j\neq J} \mathds{1}_{(Z(t)=j)}\widetilde{V}_{j}^{\star}(t) +  \sum_{j \in \Jf\atop j\neq 2J+1} \mathds{1}_{(Z(t)=j)}\rho(\tau)\widetilde{V}_{j'}^{\star,+}(t) \, \Bigg| \mathcal{F}^S(t)\Bigg] \\
&= \sum_{j\in \Jp\atop j\neq J}p_{z_0j}(0,t)\widetilde{V}_j^{\star}(t)+\sum_{j\in \Jf\atop j\neq 2J+1}p^\rho_{z_0j}(0,t)\widetilde{V}_{j'}^{\star,+}(t),
\end{align*}
as desired.
\end{proof}

As already mentioned, the technical reserves $V^{\star,\dagger}$, $\widetilde{V}^{\star}$, and $\widetilde{V}^{\star,+}$ can be computed efficiently using differential equations of Thiele type, while the $\rho$-modified transition probabilities are simply computed according to~\eqref{eq:kol_mod}. Thus Proposition~\ref{prop:projtechnical} reduces the computational complexity to that of computing $Q$-modified transition probabilities $p_{z_0j}^Q(0,\cdot)$. This computation is studied in details in the next subsection.

\subsection{$Q$-modified transition probabilities}\label{subsec:q_mod}

We are now ready to present a system of differential equations for the $Q$-modified transition probabilities $p_{z_0j}^Q(0,\cdot)$; here $p^\rho_{z_0j}(0,\cdot) := p_{z_0j}(0,\cdot)$ for $z_0 \in \Jf$, which is in accordance with $\tau = 0$ for $z_0 \in \Jf$ and the assumption $\rho(0) = 1$. 

\begin{theorem}\label{thm:diffQ}
The $Q$-modified transition probabilities $p^Q_{z_0j}(0,\cdot)$ satisfy for $j\in\J$ the differential equations
\begin{align}\label{eq:diff_pQ}
\begin{split}
\deriv{}{t}p^Q_{z_0j}(0,t)
= \, &
\frac{
p_{z_0j}(0,t)\delta_0(t,j) + p^\rho_{z_0j}(0,t)\delta_1(t,j) + p^Q_{z_0j}(0,t)\delta_2(t,j) 
}
{
V_j^{\star,\dagger}(t)
} \\
&-
p^Q_{z_0j}(0,t)\mu_{j\bullet}(t)
+\sum_{k\in \J\atop k\neq j}p^Q_{z_0k}(0,t)\mu_{kj}(t),
\hspace{10mm}  p^Q_{z_0j}(0,0)=0.
\end{split}
\end{align}
\end{theorem}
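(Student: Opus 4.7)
The plan is to apply the product rule for semimartingales to $Q(t)\mathds{1}_{(Z(t)=j)}$, substitute the dynamics from~\eqref{eq:Q} together with the affine decomposition of $\delta$ from Assumption~\ref{assump:controls}, and then condition on $\mathcal{F}^S(t)$, exploiting the $\pr$-independence of $Z$ and $S$.

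First, since $Q$ is absolutely continuous and $t\mapsto \mathds{1}_{(Z(t)=j)}$ is of pure jump type, the quadratic covariation vanishes and the product rule yields
\begin{align*}
\md\!\left(Q(t)\mathds{1}_{(Z(t)=j)}\right) = \mathds{1}_{(Z(t-)=j)}\md Q(t) + Q(t)\md \mathds{1}_{(Z(t)=j)},
\end{align*}
where continuity of $Q$ allows me to identify $Q(t-)$ with $Q(t)$. Substituting~\eqref{eq:Q} and~\eqref{eq:delta} and noting that on $\{Z(t-)=j\}$ the state $Z(t)$ can be replaced by $j$ for $\md t$-almost every $t$, the absolutely continuous contribution becomes
\begin{align*}
\mathds{1}_{(Z(t-)=j)}\frac{\delta_0(t,j) + \delta_1(t,j)\rho(\tau)^{\mathds{1}_{(\tau\leq t)}} + \delta_2(t,j)Q(t)}{V^{\star,\dagger}_j(t)}\md t.
\end{align*}
The jump part decomposes as $\md\mathds{1}_{(Z(t)=j)} = \sum_{k\neq j}\!\left(\md N_{kj}(t) - \md N_{jk}(t)\right)$, and the $\mathcal{F}$-compensator $\mathds{1}_{(Z(t-)=i)}\mu_{ik}(t)\md t$ of $N_{ik}$ then produces drift terms $Q(t)\mathds{1}_{(Z(t-)=i)}\mu_{ik}(t)\md t$ plus martingale increments $Q(s)\md M_{ik}(s)$.

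After integrating from $0$ to $t$ and applying $\E[\cdot\mid\mathcal{F}^S(t)]$, the non-martingale terms are readily identified: each of $\delta_0(s,j)$, $\delta_1(s,j)$, $\delta_2(s,j)$, and $V^{\star,\dagger}_j(s)$ is $\mathcal{F}^S(s)$-measurable and therefore passes outside the conditional expectation, while for $s\leq t$ the product structure of $(Z,S)$ under $\pr$ gives $\E[\mathds{1}_{(Z(s)=j)}\mid\mathcal{F}^S(t)] = p_{z_0 j}(0,s)$, $\E[\mathds{1}_{(Z(s)=j)}\rho(\tau)^{\mathds{1}_{(\tau\leq s)}}\mid\mathcal{F}^S(t)] = p^\rho_{z_0 j}(0,s)$, and $\E[\mathds{1}_{(Z(s)=j)}Q(s)\mid\mathcal{F}^S(t)] = p^Q_{z_0 j}(0,s)$. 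Assembling the pieces, differentiating in $t$, and reading off the initial condition $p^Q_{z_0 j}(0,0) = 0$ from $Q(0)=0$ yields the asserted ODE.

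The main obstacle is verifying that the martingale contributions $\int_0^\cdot Q(s)\,\md M_{ik}(s)$ have vanishing $\mathcal{F}^S(t)$-conditional mean, given that $Q$ is itself driven by $Z$. I would handle this by first enlarging the conditioning to $\mathcal{F}^S(\infty)$: since $S$ and $Z$ are independent, this conditioning preserves the law of $Z$ and its compensators, so $M_{ik}$ remains a martingale in the enlarged filtration and the stochastic integral of the predictable integrand $Q$ is again a martingale issued from zero; the tower property then gives vanishing under $\mathcal{F}^S(t)$. A secondary point requiring care is justifying that $\E[\mathds{1}_{(Z(s)=j)}Q(s)\mid\mathcal{F}^S(t)]$ depends on $S$ only through $\mathcal{F}^S(s)$, which again follows from the product structure, since $Q(s)$ is a functional of $(Z(\cdot\wedge s),S(\cdot\wedge s))$ and $Z$ is independent of $S$.
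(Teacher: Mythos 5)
Your proof is correct, but it follows a genuinely different route from the paper's. You work \emph{forward}: Itô's product rule for $Q(t)\mathds{1}_{(Z(t)=j)}$, substitution of the affine dividend dynamics, compensation of the counting processes, and then conditioning on $\mathcal{F}^S(t)$, with the key technical point being that the compensated integrals $\int_0^\cdot Q(s)\md M_{ik}(s)$ have vanishing conditional mean --- which you justify properly via the independence enlargement to $\mathcal{F}^Z(\cdot)\vee\mathcal{F}^S(\infty)$ (the same martingale device the paper itself uses in the proof of Proposition~\ref{prop:FDB}, so it is fully consistent with the standing regularity assumptions). The paper instead derives a \emph{retrospective integral representation} first: it replaces $\rho(\tau)^{\mathds{1}_{(\tau\leq u)}}$ and $Q(u)$ by their state-wise conditional means $p^\rho_{z_0k}(0,u)/p_{z_0k}(0,u)$ and $p^Q_{z_0k}(0,u)/p_{z_0k}(0,u)$, obtains
$p^Q_{z_0j}(0,t)=\int_0^t\sum_{k}p_{z_0k}(0,u)p_{kj}(u,t)b_k^Q(u)\md u$
via Markovianity and Fubini, and then differentiates with Leibniz' rule and Kolmogorov's forward equations. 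The paper's route avoids stochastic calculus with jumps entirely and yields as a by-product the explicit retrospective formula that underpins the link to state-wise retrospective reserves in the subsequent remark; your route is more direct, dispenses with the intermediate representation and the double appeal to Kolmogorov's equations, and makes the origin of the $-p^Q_{z_0j}\mu_{j\bullet}+\sum_k p^Q_{z_0k}\mu_{kj}$ terms transparent as the compensated jump dynamics of $\mathds{1}_{(Z(t)=j)}$. Both arguments rely on the same measurability facts ($\delta_i(\cdot,j)$, $V^{\star,\dagger}_j$, and $\mathcal{I}$ being $\mathcal{F}^S$-adapted or deterministic, and $Z\perp S$), and both arrive at the same coupled system.
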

\begin{proof}
The boundary conditions follows by the assumption that $Q(0)=0$. Referring to~\eqref{def:pQ} and \eqref{eq:Q}, we have
\begin{align*}
p^Q_{z_0j}(0,t)
&=
\E\!\left[\left. \mathds{1}_{(Z(t)=j)}Q(t) \, \right|  \mathcal{F}^{S}(t)\right] = \E\!\left[\left. \mathds{1}_{(Z(t)=j)} \int_0^t \frac{\delta(u)}{V^{\star,\dagger}_{Z(u)}(u)} \, \md u  \, \right|  \mathcal{F}^{S}(t)\right] 
\end{align*}
with 
\begin{align*}
\delta(t) = \delta_0(t,Z(t)) + \delta_1(t,Z(t))\rho(\tau)^{\mathds{1}_{(\tau\leq t)}} +\delta_2(t,Z(t))Q(t).
\end{align*}
Note that for $0\leq u \leq t$ and $k\in\J$,
\begin{align*}
\E\!\left[\left. \mathds{1}_{(Z(u)=k)} \frac{p^Q_{z_0k}(0,u)}{p_{z_0k}(0,u)} \, \right|  \mathcal{F}^{S}(t)\right]
&=
\E\!\left[\left. \mathds{1}_{(Z(u)=k)}Q(u) \, \right|  \mathcal{F}^{S}(t)\right]\!, \\
\E\!\left[ \mathds{1}_{(Z(u)=k)} \frac{p^\rho_{z_0k}(0,u)}{p_{z_0k}(0,u)}  \right]
&=
\E\!\left[ \mathds{1}_{(Z(u)=k)}\rho(\tau)^{\mathds{1}_{(\tau\leq u)}}  \right]\!.
\end{align*}
Thus by Markovianity of $Z$ and independence between $Z$ and $S$,
\begin{align} \label{eq:retro_rep}
p^Q_{z_0j}(0,t)
&=
\E\!\left[\left. \mathds{1}_{(Z(t) = j)} \int_0^t \sum_{k \in \J} \mathds{1}_{(Z(u)=k)} b_k^Q(u) \, \md u  \, \right|  \mathcal{F}^{S}(t)\right]
\end{align}
with $b_k^Q$, $k\in\J$, given by
\begin{align}\label{eq:retro_pay}
b_k^Q(u)
=
\frac{
\delta_0(u,k) + \delta_1(u,k)\frac{
p^\rho_{z_0k}(0,u)
}
{
p_{z_0k}(0,u)
}
+
\delta_2(u,k)\frac{
p^Q_{z_0k}(0,u)
}
{
p_{z_0k}(0,u)
}
}
{
V^{\star,\dagger}_{k}(u)
}
\end{align}
for all $u \geq 0$. The assumption of independence between $Z$ and $S$, Markovianity of $Z$, and Fubini's theorem finally yield
\begin{align}\label{eq:pq_rep}
p^Q_{z_0j}(0,t)
=
\int_0^t \sum_{k \in \J} p_{z_0k}(0,u) p_{kj}(u,t) b_k^Q(u) \, \md u.
\end{align}
The statement of the theorem is now established by differentiation as follows. Leibniz' integration rule gives
\begin{align*}
\deriv{}{t} p^Q_{z_0j}(0,t) &= \sum_{k\in \J}1_{(k=j)}p_{z_0k}(0,t)b_k^Q(t)+\int_0^t \sum_{k\in \J} p_{z_0k}(0,u)\left(\deriv{}{t}p_{kj}(u,t)\right)  b_k^Q(u) \md u \\
&=
\frac{
\delta_0(t,j)p_{z_0j}(0,t) + \delta_1(t,j)p^\rho_{z_0j}(0,t)+\delta_2(t,j)p^Q_{z_0j}(0,t)
}
{
V^{\star,\dagger}_{j}(t)
} \\
& \hspace{4mm} +
\int_0^t \sum_{k\in \J} p_{z_0k}(0,u)\left(\deriv{}{t}p_{kj}(u,t)\right)  b_k^Q(u) \md u.
\end{align*} 
Applying Kolmogorov's forward differential equations and~\eqref{eq:pq_rep} to the last line of the equation we find that
\begin{align*}
\deriv{}{t} p^Q_{z_0j}(0,t) = \, &
\frac{
\delta_0(t,j)p_{z_0j}(0,t) + \delta_1(t,j)p^\rho_{z_0j}(0,t)+\delta_2(t,j)p^Q_{z_0j}(0,t)
}
{
V^{\star,\dagger}_{j}(t)
} \\
& - p_{z_0j}^Q(0,t)\mu_{j\bullet}(t)
+ \sum_{\ell \in \J\atop \ell \neq j} p_{z_0\ell}^Q(0,t) \mu_{\ell j}(t)
\end{align*}
as desired. 
\end{proof}

\begin{remark}
There exists a clear link between $Q$-modified transition probabilities and so-called state-wise retrospective reserves. Referring to~\eqref{eq:retro_rep} and~\eqref{eq:retro_pay}, we see that for a fixed financial scenario,
\begin{align*}
W_j(\cdot) :=
\frac{
p_{z_0j}^Q(0,\cdot)
}
{
p_{z_0j}(0,\cdot)
}
\end{align*}
corresponds to the state-wise retrospective reserve of~\cite{norberg1991} (in the presence of information $\mathcal{G}(t) = \mathcal{F}^S(t) \vee \sigma(Z(t))$, cf.~\cite{norberg1991} Subsection 5.B) with payments
\begin{align*}
-\sum_{j \in \J} \mathds{1}_{(Z(t) = j)} b^Q_j(t) \, \md t
\end{align*}
and interest rate zero. Contrary to the primary setup of~\cite{norberg1991}, the payments considered here are functions of the state-wise retrospective reserves $W_j(\cdot)$. \demoo
\end{remark}  

The system of differential equations for $p_{z_0j}^Q(0,\cdot)$ from Theorem~\ref{thm:diffQ} involves the shape of the insurance business $\mathcal{I}$ through the mappings $\delta_0$, $\delta_1$, and $\delta_2$.
Together with the results of the previous subsection, Theorem~\ref{thm:diffQ} allows us to formulate a procedure for the calculation of $V^b(0)$. The procedure is presented in the next subsection.

\subsection{Numerical procedure}\label{subsec:proj_implement}
Based on the results of the previous subsections, we demonstrate a procedure for the scenario-based projection model. An actual numerical example is given later in Section~\ref{sec:num_ex}. In what follows, we suppose we are given mappings $(\delta,\eta)$ serving as controls. They are assumed to satisfy Assumption~\ref{assump:controls}.

Besides the financial scenarios, the input consists of the following quantities which can be precalculated independently of the financial scenarios:
\begin{enumerate}[(1)]
\item The expected accumulated cash flow of predetermined payments $ A^\circ(0,s)$ for $s\geq0$ as in~\eqref{eq:UnitCFjp}.   
\item The portfolio-wide mean technical reserve of predetermined payments $\bar{V}^{\star,\circ}(t)$ for all $t\geq 0$ calculated via~\eqref{eq:projtechnicalunit}. 
\item For each $t\geq 0$, state-wise expected accumulated unit bonus cash flows $A_j^\dagger(t,s)$ for all $s\geq t$ and $j\in \J$ as in~\eqref{eq:CFtilde_state}--\eqref{eq:CFtilde_state_a}. 
\item State-wise technical unit reserves $V^{\star,\dagger}_j(t)$ for all $t\geq 0$ and $j\in \J$~as in \eqref{eq:TechnicalTilde}.   
\item Transition probabilities $p_{z_0j}(0,t)$ for all $t\geq 0$ and $j\in \J$.
\end{enumerate}
As discussed previously, this input can be calculated using classic methods for solving differential equations of Thiele type as well as ($\rho$-modified) Kolmogorov forward dif\-ferential equations.

The financial scenarios are $N$ realizations $\{S^k(t)\}_{t\geq0}$, $k=1,\ldots,N$, of $\{S(t)\}_{t\geq0}$ with corresponding short rate $r^k$ and forward rate curves $f^k$. We consider them as output of an economic scenario generator.

The procedure essentially consists of computing $p^{Q}_{z_0j}(0,\cdot)$, $j\in\J$, and $U(\cdot)$ in each financial scenario by solving a system of (stochastic) differential equations. The involved part is to evaluate the differentials. The procedure looks as follows. For each financial scenario $k=1,\ldots,N$:
\begin{itemize}
\item Initialize with $p_{z_0j}^{Q,k}(0,0) = 0$ for all $j\in \J$ and $U^k(0)=u_0$. 
\item Apply a numerical algorithm to solve the coupled (stochastic) differential equation systems for $p^{Q,k}_{z_0j}(0,\cdot)$, $j\in\J$, and $U^k(\cdot)$ from Theorem~\ref{thm:diffQ} and~\eqref{eq:dU}, respectively.
\begin{itemize}
\item Evaluating the differentials at time $t$ involves the mappings $(\delta_0,\delta_1,\delta_2,\eta)$ from~\eqref{eq:eta}--\eqref{eq:delta}. By inspection of the differentials and these mappings, we see that we require the shape of the insurance business
\begin{align*}
\mathcal{I}^k(t)=\left(U^k(t),\bar{A}^{g,k}(t,\mathrm{d}s),\bar{V}^{\star,k}(t)\right)\!,
\end{align*}
the expected bonus cash flow $a^{b,k}(0,t)$, as well as the input. Computation of $\bar{A}^{g,k}(t,\mathrm{d}s)$, $\bar{V}^{\star,k}(t)$, and $a^{b,k}(0,t)$ is achieved via Proposition~\ref{prop:projmarket}, Proposition~\ref{prop:projtechnical}, and~\eqref{eq:bonusCFa}.
\end{itemize}  
\item We emphasize that as part of evaluating the differentials we computed the expected bonus cash flow $a^{b,k}(0,\cdot)$.
\end{itemize}
The procedure completes by computing the market value of bonus payments $V^b(0)$ via
\begin{align*}
 V^b(0) &\approx
\frac{1}{N}\sum_{k=1}^N \int_0^n e^{-\int_0^{t} r^k(v) \md v} a^{b,k}(0,t)\md t
\end{align*} 
using an algorithm for numerical integration.

Note that we require the input (3), which are the state-wise expected accumulated unit bonus cash flows $A_j^\dagger(\cdot,\cdot)$ evaluated on the two-dimensional time grid $\{(t,s) \in [0,\infty)^2 : t \leq s\}$. To precompute this input, one must solve Kolmogorov's forward differential equations many times, once for every $t \geq 0$ and $j\in\J$. This significantly impacts the numerical efficiency of the procedure. Furthermore, the algorithm itself depends on the market basis for the specific insured through the transition rates $\mu$. In practice, where the algorithm must be executed for many insured, one must view the specific transition rates for a single insured as input.

In general, it is preferable to use analytical methods and numerical methods for ordinary differential equations compared to Monte Carlo methods. For example, solving Kolmogorov's forward differential equations in order to calculate an expected cash flow may be done orders of magnitudes faster (for a given precision requirement) compared to calculating the same expected cash flow via Monte Carlo methods. The procedure presented above shows how to disentangle biometric and behavioral risk from financial risk, allowing us to solve part of the problem via numerical methods for differential equations instead of using Monte Carlo methods. For a given precision requirement, this significantly reduces the time required to handle biometric and behavioral risk. If the portfolio merely consists of a single insured, which we presumed, it is however the simulation of financial risk that constitutes the main numerical complexity and time usage. As the number of insured increases, so does the relative time needed to calculate expected cash flows, since the same financial scenarios are used across all insured, and thus the numerical speed up of our procedure, compared with a full-blown Monte Carlo approach, should become significant.

In the following section, we present the simpler state-independent scenario-based projection model, where we require that the dividend strategy be specified (or approximated) such that $Q$ is $\mathcal{F}^S$-adapted. By presenting a numerical procedure for the model, we show how this requirement on the dividend strategies leads to a numerical speedup.

\section{State-independent scenario-based projection model}\label{sec:state_indep_projection}

This section concerns the formulation of the state-independent scenario-based projection model. The model is a special case of the projection model from Section~\ref{sec:projection} which relies on ensuring $Q$ to be an $\mathcal{F}^{S}$-adapted process such that the simplified case of Proposition~\ref{prop:FDB} applies.

In Subsection~\ref{subsec:class_div}, we provide sufficient conditions on $\delta$ such that $Q$ is $\mathcal{F}^{S}$-adapted. Next, Subsection~\ref{subsec:pro_shape_re} revisits the projection of the shape under this simplification. Finally, in Subsection~\ref{subsec:implement_indep} we present a procedure for the computation of the market value of bonus payments in the state-independent projection model.

\subsection{Background}\label{subsec:background}

The concept of state-independent modeling is uncommon in the literature, with the projection model described in~\cite[][Section 4]{JensenSchomacker2015} being one of few exceptions. It is our impression that projections models such as the one found in~\cite[][Section 4]{JensenSchomacker2015} have been implemented or are being implemented in practice, which further underlines the importance of studying state-independent scenario-based projection models in more detail.

In the projection model described in~\cite[][Section 4]{JensenSchomacker2015}, additional benefits are bought according to the portfolio-wide mean $\bar{V}^{\star,\dagger}$ of the technical reserve rather than the actual technical reserve $V^{\star,\dagger}_{Z(\cdot)}$, see~\cite[][p.\ 196]{JensenSchomacker2015}. Furthermore, the dividend yield is implicitly assumed $\mathcal{F}^S$-adapted, since in~\cite{JensenSchomacker2015} the evolution of the policy is only described on an averaged portfolio level. In unity, this leads to an $\mathcal{F}^S$-adapted $Q$.  In the following, we provide sufficient conditions on $\delta$ within our setup to ensure this adaptability. Later, in Example~\ref{ex:second_order_int} in Subsection~\ref{subsec:pro_shape_re}, we provide a slightly more explicit link to the projection model in~\cite[][Section 4]{JensenSchomacker2015}.

\subsection{Class of dividend strategies}\label{subsec:class_div}

Recall from~\eqref{eq:Q} and~\eqref{eq:delta} that $Q$ is the solution to the differential/integral equation
\begin{align*}
\md Q(t) = \frac{\delta_0(t,Z(t)) + \delta_1(t,Z(t))\rho(\tau)^{\mathds{1}_{(\tau\leq t)}}+\delta_2(t,Z(t))Q(t)}{V^{\star,\dagger}_{Z(t)}(t)}\md t, \quad Q(0)=0.
\end{align*}
To ensure that $Q$ is an $\mathcal{F}^{S}$-adapted process, it suffices to require that $\delta_0$, $\delta_1$ and $\delta_2$ are on the form  
\begin{align}\label{eq:deltaform}
\delta_i(t,Z(t)) &= \widetilde{\delta}_i(t) V_{Z(t)}^{\star,\dagger}(t), \quad \quad i=0,2, \\ \label{eq:deltaform2}
\delta_1(t,Z(t)) &= 0, 
\end{align}
where we have used the shorthand notation $\widetilde{\delta}_i(t) = \widetilde{\delta}_i\left(t,S(\cdot \wedge t), \mathcal{I}(t)\right)$ for suitably regular deterministic mappings $\widetilde{\delta}_i$, $i=0,2$. This is a consequence of the following observation. When~\eqref{eq:deltaform}--\eqref{eq:deltaform2} hold, then simply
\begin{align}\label{eq:Qtilde}
\md Q(t) = \big(\, \widetilde{\delta}_0(t) + \widetilde{\delta}_2(t)Q(t)\big)\!\md t, \quad Q(0) = 0.
\end{align}
This implies $p^Q_{z_0j}(0,t) = Q(t) p_{z_0j}(0,t)$, cf.~\eqref{eq:pQ_split}.

\begin{remark}\label{rmk:non_affine_stateindep}
Since the class of dividend strategies presented here builds on Assumption~\ref{assump:controls}, affinity in $Q$ is more or less implicitly assumed. The simplifications we obtain in the following Subsections~\ref{subsec:pro_shape_re}--\ref{subsec:implement_indep} build on $Q$ being $\mathcal{F}^S$-adapted rather than the dividend strategy being affine in $Q$. The results are therefore trivially extendable to dividend strategies that are non-affine in the number of additional benefits held. \demoo
\end{remark}

\subsection{Projecting the shape revisited}\label{subsec:pro_shape_re}

For the portfolio-wide means $\bar{A}^g$ we observe a simplification in the part that concerns future bonus payments similar to what we previously saw concerning the predetermined payments:

\begin{corollary}\label{col:state_indep_projmarket}
Assume that the dividend strategy $\delta$ is on the form~\eqref{eq:deltaform}--\eqref{eq:deltaform2}. The portfolio-wide means $\bar{A}^g$ of the expected accumulated guaranteed cash flows $A^g$ then read
\begin{align*}
\bar{A}^g(t,\mathrm{ds})=A^\circ(0,\mathrm{d}s)+Q(t)A^\dagger(0,\mathrm{d}s).
\end{align*}
\end{corollary}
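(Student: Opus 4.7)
The plan is to deduce this directly from Proposition~\ref{prop:projmarket} together with the state-independence of $Q$ established just above in Subsection~\ref{subsec:class_div}. Concretely, under~\eqref{eq:deltaform}--\eqref{eq:deltaform2}, the dynamics of $Q$ reduce to the scalar ODE~\eqref{eq:Qtilde} driven only by $\widetilde{\delta}_0$ and $\widetilde{\delta}_2$, which by Assumption~\ref{assump:controls} are $\mathcal{F}^S$-adapted functionals. Hence $Q$ is itself $\mathcal{F}^S$-adapted, and we may invoke the simplification~\eqref{eq:pQ_split} of Proposition~\ref{prop:FDB} to write $p^Q_{z_0j}(0,t) = Q(t) p_{z_0j}(0,t)$ for every $j\in\J$ and $t\geq 0$.

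Substituting this identity into the general formula from Proposition~\ref{prop:projmarket} yields
\begin{align*}
\bar{A}^g(t,\md s) = A^\circ(0,\md s) + Q(t) \sum_{j \in \J} p_{z_0j}(0,t) A^\dagger_j(t,\md s).
\end{align*}
It then remains to verify that the remaining sum equals $A^\dagger(0,\md s)$. For this, I would compare the explicit densities in~\eqref{eq:CFtilde_a} and~\eqref{eq:CFtilde_state_a}: plugging in the state-wise density $a^\dagger_j(t,s)$, interchanging the two finite sums over $\J$, and applying the Chapman--Kolmogorov relation $\sum_{i \in \J} p_{z_0i}(0,t) p_{ij}(t,s) = p_{z_0j}(0,s)$ collapses the expression to $\sum_{j \in \J} p_{z_0j}(0,s) \big(b^\dagger_j(s) + \sum_{k\neq j} b^\dagger_{jk}(s)\mu_{jk}(s)\big) = a^\dagger(0,s)$. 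Hence the sum equals $a^\dagger(0,s)\md s = A^\dagger(0,\md s)$, and the corollary follows.

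There is no real obstacle here; the result is essentially a bookkeeping corollary of Proposition~\ref{prop:projmarket} and the splitting~\eqref{eq:pQ_split}. The only step that merits attention is justifying that the prerequisites of~\eqref{eq:pQ_split} are met under~\eqref{eq:deltaform}--\eqref{eq:deltaform2}, which is exactly the point of the derivation of~\eqref{eq:Qtilde} in Subsection~\ref{subsec:class_div}. Once that is in hand, the Chapman--Kolmogorov collapse is routine and one can quote the definitions~\eqref{eq:CFtilde}--\eqref{eq:CFtilde_state_a} without further computation.
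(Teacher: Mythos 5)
Your proposal is correct, and it rests on the same two pillars as the paper's proof: Proposition~\ref{prop:projmarket} plus the fact that under~\eqref{eq:deltaform}--\eqref{eq:deltaform2} the process $Q$ solves~\eqref{eq:Qtilde} and is therefore $\mathcal{F}^S$-adapted, so that $p^Q_{z_0j}(0,t)=Q(t)p_{z_0j}(0,t)$ via~\eqref{eq:pQ_split}. The only place you diverge is the final collapse $\sum_{j}p_{z_0j}(0,t)A^\dagger_j(t,\mathrm{d}s)=A^\dagger(0,\mathrm{d}s)$: you carry this out at the level of the explicit densities~\eqref{eq:CFtilde_a} and~\eqref{eq:CFtilde_state_a} via the Chapman--Kolmogorov relation, whereas the paper stays at the level of the payment stream, writing $\E[\,Q(t)A^\dagger(t,s)\,|\,\mathcal{F}^S(t)]=Q(t)\E[B^\dagger(s)-B^\dagger(t)]$ by the tower property and independence of $Z$ and $S$, and then observing that the correction term $Q(t)\E[B^\dagger(t)-B^\dagger(0)]$ is constant in $s$ and so vanishes upon taking the differential in $s$ (exactly as in the proof of Proposition~\ref{prop:projmarket} for the $A^\circ$ term). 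Your route makes explicit use of the Markov property of $Z$ under $\pr$ and of the specific form of the state-wise cash flow densities; the paper's route is marginally more abstract and avoids manipulating the densities altogether. Both are valid and of essentially the same length, so this is a matter of taste rather than substance.
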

\begin{proof}
From Proposition~\ref{prop:projmarket} and its proof, we have
\begin{align*}
\bar{A}^g(t,s) = A^\circ(0,s) - E\!\left[B^\circ(t) - B^\circ(0)\right] + \E\!\left[\left. Q(t)A^{\dagger}(t, s) \, \right| \mathcal{F}^S(t)\right].
\end{align*}
Since by assumption $Q$ is $\mathcal{F}^S$-adapted and $Z$ and $S$ are independent, referring to~\eqref{eq:Vcirc} with superscript $\circ$ replaced by $\dagger$ and applying the law of iterated expectations yields
\begin{align*}
\E\!\left[\left. Q(t)A^{\dagger}(t, s) \, \right| \mathcal{F}^S(t)\right]
&=
Q(t) \E\!\left[B^{\dagger}(s) - B^{\dagger}(t)\right] \\
&=
Q(t)A^{\dagger}(0,s) - Q(t) \E\!\left[B^{\dagger}(t) - B^{\dagger}(0)\right]
\end{align*}
Consequently,
\begin{align*}
\bar{A}^g(t, \md s)
=
A^\circ(0,\md s) + Q(t) A^\dagger(0, \md s) 
\end{align*}
as desired.
\end{proof}

For the technical reserve, the result is similar. Before we present the result, let the portfolio-wide mean technical unit bonus reserve $\bar{V}^{\star,\dagger}$ be given by
\begin{align*}
\bar{V}^{\star,\dagger}(t) = \E\!\left[\left. V_{Z(t)}^{\star,\dagger}(t)\, \right| \mathcal{F}^S(t)\right]
\end{align*}
for $t\geq 0$. Since $Z$ and $S$ are assumed independent, we could replace the conditional expectation by an ordinary expectation. It is then a trivial observation that
\begin{align}\label{eq:projtechnicalunitDagger}
\bar{V}^{\star,\dagger}(t) = \sum_{j\in\J} p_{z_0j}(0,t) V_j^{\star,\dagger}(t).
\end{align}
\begin{corollary}\label{col:state_indep_projtechnical}
Assume that the dividend strategy $\delta$ is on the form~\eqref{eq:deltaform}--\eqref{eq:deltaform2}. The portfolio-wide mean technical reserve of guaranteed payments then reads
\begin{align}\nonumber
\bar{V}^\star(t) &= \bar{V}^{\star,\circ}(t) + Q(t)\bar{V}^{\star,\dagger}(t).
\end{align}
\end{corollary}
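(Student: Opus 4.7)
The plan is to derive the corollary directly from Proposition~\ref{prop:projtechnical} by exploiting the $\mathcal{F}^S$-adaptedness of $Q$ that the assumed form~\eqref{eq:deltaform}--\eqref{eq:deltaform2} of the dividend strategy guarantees. The main observation is that the only difference between the general formula and the claimed simplification lies in how the $Q$-modified transition probabilities $p^Q_{z_0j}(0,t)$ factor, and this is exactly what Proposition~\ref{prop:FDB} tells us in the state-independent setting.

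First, I would recall from Proposition~\ref{prop:projtechnical} that in general
\begin{align*}
\bar{V}^\star(t) = \bar{V}^{\star,\circ}(t) + \sum_{j\in\J} p^Q_{z_0j}(0,t) V^{\star,\dagger}_j(t).
\end{align*}
Second, I would invoke the argument from Subsection~\ref{subsec:class_div} showing that when $\delta$ takes the form~\eqref{eq:deltaform}--\eqref{eq:deltaform2}, the dynamics~\eqref{eq:Qtilde} of $Q$ involve only $\mathcal{F}^S$-measurable coefficients, so that $Q$ is $\mathcal{F}^S$-adapted. This triggers the identity $p^Q_{z_0j}(0,t) = Q(t) p_{z_0j}(0,t)$ from~\eqref{eq:pQ_split} of Proposition~\ref{prop:FDB}.

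Substituting this into the formula above and pulling $Q(t)$ outside the sum gives
\begin{align*}
\bar{V}^\star(t) = \bar{V}^{\star,\circ}(t) + Q(t) \sum_{j\in\J} p_{z_0j}(0,t) V^{\star,\dagger}_j(t),
\end{align*}
and the remaining sum is precisely $\bar{V}^{\star,\dagger}(t)$ by~\eqref{eq:projtechnicalunitDagger}. There is essentially no obstacle here: the statement is an immediate consequence of Proposition~\ref{prop:projtechnical}, Proposition~\ref{prop:FDB}, and the identity~\eqref{eq:projtechnicalunitDagger}, and the proof reduces to a one-line substitution. The only subtlety worth flagging explicitly in the write-up is the justification that $Q$ is indeed $\mathcal{F}^S$-adapted under~\eqref{eq:deltaform}--\eqref{eq:deltaform2}, which follows from inspecting~\eqref{eq:Qtilde}.
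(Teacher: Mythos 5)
Your proposal is correct and follows exactly the paper's own argument: the paper also derives the corollary by combining Proposition~\ref{prop:projtechnical}, the factorization $p^Q_{z_0j}(0,t) = Q(t)p_{z_0j}(0,t)$ from~\eqref{eq:pQ_split} (valid because $Q$ is $\mathcal{F}^S$-adapted under~\eqref{eq:deltaform}--\eqref{eq:deltaform2}), and the identity~\eqref{eq:projtechnicalunitDagger}. Your write-up merely spells out the substitution that the paper leaves implicit.
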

\begin{proof}
Since by assumption, $Q$ is $\mathcal{F}^S$-adapted and $Z$ and $S$ are independent, the result follows immediately from~\eqref{eq:pQ_split}, Proposition~\ref{prop:projtechnical}, and~\eqref{eq:projtechnicalunitDagger}.
\end{proof}
The following example is a continuation of Example~\ref{ex:rente} regarding the accumulation of the technical reserve with a second order interest rate.
\begin{example}[Second order interest rate continued] \label{ex:second_order_int}
The dividend strategy from Example \ref{ex:rente} regarding accumulation of the technical reserve $V^\star$ with a second order interest rate $r^\delta$ does not satisfy the requirements on $\delta$ from \eqref{eq:deltaform}--\eqref{eq:deltaform2}. Instead, the strategy 
\begin{align}\label{eq:rente2}
\delta(t) &= \big(r^\delta(t)-r^\star(t)\big)\frac{\bar{V}^{\star}(t)}{\bar{V}^{\star,\dagger}(t)} V^{\star,\dagger}_{Z(t)}(t),
\end{align}
satisfies \eqref{eq:deltaform}--\eqref{eq:deltaform2} with
$$\widetilde{\delta}_0(t) =  (r^\delta(t)-r^\star(t))\frac{\bar{V}^{\star,\circ}(t)}{\bar{V}^{\star,\dagger}(t)} \quad  \text{and} \quad 
\widetilde{\delta}_2(t)=(r^\delta(t)-r^\star(t)).$$
One may think of this strategy as an accumulation of the portfolio-wide mean technical reserve $\bar{V}^\star$ with $r^\delta$ instead, since by \eqref{eq:Qtilde},
\begin{align*}
\bar{V}^{\star,\dagger}(t)\md Q(t) = \big(r^\delta(t)-r^\star(t)\big)\bar{V}^{\star}(t)\md t.
\end{align*}
This is in accordance with the projection model in~\cite[][Section 4]{JensenSchomacker2015}. By multiplying the strategy \eqref{eq:rente2} with 
\begin{align*}
\frac{V^\star(t)}{\bar{V}^{\star}(t)} \quad \text{and} \quad \frac{\bar{V}^{\star,\dagger}(t)}{V^{\star,\dagger}_{Z(t)}(t)}
\end{align*}
one arrives at strategy of Example~\ref{ex:rente}. If the two ratios are close to one, the strategy~\eqref{eq:rente2} approximates the strategy of Example~\ref{ex:rente}. Note that $\E\!\left[\left. V^\star(t) / \, \bar{V}^{\star}(t) \, \right| \mathcal{F}^S(t)\right]=1$, i.e.\ the portfolio-wide mean of the first ratio is equal to one. For the latter ratio this is not necessarily the case, since it is non-linear in $V^{\star,\dagger}_{Z(t)}(t)$. 

Even if the ratioes are not close to one, the strategy given by~\eqref{eq:rente2} may still approximate the strategy of Example~\ref{ex:rente} in terms of producing an akin time zero market value of bonus payments. Taking as starting point the strategy of Example~\ref{ex:rente} and the expression of Remark~\ref{rmk:approx_expression}, we find that
\begin{align*}
V^b(0) = \int_0^n \E\!\left[e^{-\int_0^s r(v)\md v} \big(r^\delta(s)-r^\star(s)\big) V^\star(s)\frac{V^\dagger_{Z(s)}(s)}{V_{Z(s)}^{\star,\dagger}(s)}\right] \mathrm{d}s,
\end{align*}
so that by the law of iterated expectations,
\begin{align}\label{eq:vb0exSD}
V^b(0) = \int_0^n \E\!\left[e^{-\int_0^s r(v)\md v} \big(r^\delta(s)-r^\star(s)\big) \E\!\left[\left.  V^\star(s)\frac{V^\dagger_{Z(s)}(s)}{V_{Z(s)}^{\star,\dagger}(s)} \, \right| \mathcal{F}^S(s) \right]\right] \mathrm{d}s.
\end{align}
Taking instead as a starting point the strategy given by~\eqref{eq:rente2}, we find the expression
\begin{align}\label{eq:vb0exSI}
\int_0^n \E\!\left[e^{-\int_0^s r(v)\md v} \big(r^\delta(s)-r^\star(s)\big) \bar{V}^{\star}(s) \frac{V_{Z(s)}^\dagger(s)}{\bar{V}^{\star,\dagger}(s)}\right]\mathrm{d}s
\end{align}
for the corresponding time zero market value of bonus payments. We should like to hightlight that the second order interest rate $r^\delta$ depends on $Q$ through the shape of the insurance business, so that the second order interest rates of~\eqref{eq:vb0exSD} and~\eqref{eq:vb0exSI} may differ.

From~\eqref{eq:vb0exSD} and~\eqref{eq:vb0exSI}, we see that the strategy given by~\eqref{eq:rente2} in particular leads to a decent approximation of the market value of bonus payments as long as
\begin{align*}
\big(r^\delta(s)-r^\star(s)\big)
\E\!\left[\left.  V^\star(s)\frac{V^\dagger_{Z(s)}(s)}{V_{Z(s)}^{\star,\dagger}(s)} \, \right| \mathcal{F}^S(s) \right]
\approx
\big(r^\delta(s)-r^\star(s)\big)
\bar{V}^\star(s) \frac{\E\!\left[\left. V^\dagger_{Z(s)}(s) \, \right| \mathcal{F}^S(s)\right]}{\bar{V}^{\star,\dagger}(s)}
\end{align*}
for all $s\geq0$. This is for example the case if the second order interest rates only differ ever so slightly and
\begin{align*}
\E\!\left[\left.  V^\star(s)\frac{V^\dagger_{Z(s)}(s)}{V_{Z(s)}^{\star,\dagger}(s)} \, \right| \mathcal{F}^S(s) \right]
\approx
\bar{V}^\star(s) \frac{\E\!\left[\left. V^\dagger_{Z(s)}(s) \, \right| \mathcal{F}^S(s)\right]}{\bar{V}^{\star,\dagger}(s)}
\end{align*}
for all $s\geq0$. The latter is by definition of the portfolio-wide mean technical reserve $\bar{V}^\star$ indeed the case if the safety loading $s \mapsto V^\dagger_{Z(s)}(s)/V_{Z(s)}^{\star,\dagger}(s)$ is approximately state-independent, i.e.\ does not depend significantly on $Z$ (whenever $\bar{V}^\star$ is non-zero).\demo
\end{example}

\subsection{Numerical procedure}\label{subsec:implement_indep}

Based on the results of the previous subsections, we demonstrate a procedure for the state-independent scenario-based projection model. An actual numerical example is given later in Section~\ref{sec:num_ex}. In what follows, we suppose we are given mappings $(\delta,\eta)$ serving as controls. They are assumed to satisfy Assumption~\ref{assump:controls} with $\delta$ on the form~\eqref{eq:deltaform}--\eqref{eq:deltaform2}. 

Besides the financial scenarios, the input consists of the following quantities which can be precalculated independently of the financial scenarios:
\begin{enumerate}[(1)]
\item The expected accumulated cash flow of predetermined payments $ A^\circ(0,s)$ for all $s\geq 0$ as in~\eqref{eq:UnitCFjp}.   
\item The portfolio-wide mean technical reserve of predetermined payments $\bar{V}^{\star,\circ}(t)$ for all $t\geq 0$ calculated via~\eqref{eq:projtechnicalunit}. 
\item The expected unit bonus cash flow $ a^\dagger(0,s)$ for all $s \geq 0$ as in~\eqref{eq:CFtilde_a}. 
\item The portfolio-wide mean technical unit bonus reserve $\bar{V}^{\star,\dagger}(t)$ for all $t\geq 0$ calculated via~\eqref{eq:projtechnicalunitDagger}
\end{enumerate}
As discussed previously, this input can be calculated using classic methods for solving differential equations of Thiele type as well as ($\rho$-modified) Kolmogorov forward dif\-ferential equations.

The financial scenarios are $N$ realizations $\{S^k(t)\}_{t\geq0}$, $k=1,\ldots,N$, of $\{S(t)\}_{t\geq0}$ with corresponding short rate $r^k$ and forward rate curves $f^k$. We consider them as output of an economic scenario generator.

The procedure essentially consists of computing $Q(\cdot)$ and $U(\cdot)$ in each financial scenario by solving a system of (stochastic) differential equations. The involved part is to evaluate the differentials. The procedure looks as follows. For each financial scenario $k=1,\ldots,N$:
\begin{itemize}
\item Initialize with $Q^k(0) = 0$ and $U^k(0)=u_0$. 
\item Apply a numerical algorithm to solve the coupled (stochastic) differential equation systems for $Q^k(\cdot)$ and $U^k(\cdot)$ from~\eqref{eq:Qtilde} and~\eqref{eq:dU}, respectively.
\begin{itemize}
\item Evaluating the differentials at time $t$ involves the mappings $(\widetilde{\delta}_0,\widetilde{\delta}_2,\eta)$ from \eqref{eq:eta} and~\eqref{eq:deltaform}. By inspection of the differentials and these mappings, we see that we require the shape of the insurance business
\begin{align*}
\mathcal{I}^k(t)=\left(U^k(t),\bar{A}^{g,k}(t,\mathrm{d}s),\bar{V}^{\star,k}(t)\right)\!,
\end{align*}
the expected bonus cash flow $a^{b,k}(0,t) = Q^k(t)a^\dagger(0,t)$, cf.~\eqref{eq:BonusCF_stateIndep}, as well as the input. Computation of $\bar{A}^{g,k}(t,\mathrm{d}s)$ and $\bar{V}^{\star,k}(t)$ is achieved via Corollary~\ref{col:state_indep_projmarket} and Corollary~\ref{col:state_indep_projtechnical}.
\end{itemize}  
\item We emphasize that as part of evaluating the differentials we computed the expected bonus cash flow $a^{b,k}(0,\cdot)$.
\end{itemize}
The procedure completes by computing the market value of bonus payments $V^b(0)$ via
\begin{align*}
 V^b(0) &\approx
\frac{1}{N}\sum_{k=1}^N \int_0^n e^{-\int_0^{t} r^k(v) \md v} a^{b,k}(0,t)\md t
\end{align*} 
using an algorithm for numerical integration.

Note that in comparison with the procedure of Subsection~\ref{subsec:proj_implement}, the expected unit bonus cash flows $a_j^\dagger(t,\cdot)$, $j\in\J$, have only to be precomputed for $j=z_0$ and $t=0$. This leads to a speedup. Additionally, the procedure itself does not depend on the market basis for the specific insured (except potentially through the mappings $\widetilde{\delta}_0$, $\widetilde{\delta}_2$, and $\eta$). These are the primary practical advantages that are gained by strengthening the requirements on the dividend strategy to~\eqref{eq:deltaform}--\eqref{eq:deltaform2}.

\section{Numerical example} \label{sec:num_ex}

In this section, we illustrate the methods presented in the previous sections via a numerical example intended to show how our methods and results can be applied in practice. The predetermined payments, technical basis, and market basis are based on the numerical example in~\cite{buchardt2015}; our extension consists of the inclusion of financial risk and bonus payments. The numerical example aims at illustrating similarities and differences between the state-dependent scenario-based projection model and a state-independent approximation in the spirit of~\cite[][Section 4]{JensenSchomacker2015}, cf.\ Example~\ref{ex:second_order_int}.

\subsection{Setup} \label{subsec:num_ex_setup}

The state of the insured is modeled in an eight-state disability model with policyholder behavior as depicted in Figure~ \ref{fig:extmm_numerical}. We consider a male who is $40$ years old today. His retirement age is taken to be $65$, and his predetermined payments consist of:
\begin{itemize}
\item A disability annuity of rate $100000$ per year while disabled, but only until retirement, i.e.\ age $65$.
\item A life annuity of rate $100000$ per year while alive and non-lapsed (corresponding to states $0$, $1$, $4$, and $5$), but only from retirement, i.e.\ age $65$.
\item Premium payments of rate $46409.96$ per year while active, but only until retirement, i.e.\ age $65$.
\end{itemize}
The maximal contract time is set equal to $70$, i.e.\ $n=70$, corresponding to a maximal age of $40+70=110$ years.
\begin{figure}[htb]
\centering
\scalebox{0.9}{%
\begin{tikzpicture}[node distance=2em and 0em]
\node[punkt]                                       (1)    {disabled};
\node[anchor=north east, at=(1.north east)]               {$1$};
\node[punkt, left = 40mm of 1]                     (0)    {active};
\node[anchor=north east, at=(0.north east)]               {$0$};
\node[draw = none, fill = none, left = 20 mm of 1] (test) {};
\node[punkt, below = 20mm of test]                 (2)    {dead};
\node[anchor=north east, at=(2.north east)]               {$2$};
\node[punkt, left = 20mm of 0]                     (3)    {surrender};
\node[anchor=north east, at=(3.north east)]               {$3$};
\node[punkt, below = 40mm of 0]                    (4)    {active \\ free policy};
\node[anchor=north east, at=(4.north east)]               {$4$};
\node[punkt, right = 40mm of 4]                    (5)    {disabled \\ free policy};
\node[anchor=north east, at=(5.north east)]               {$5$};
\node[draw = none, fill = none, left = 20 mm of 5] (df)   {};
\node[punkt, below = 20mm of df]                   (6)    {dead \\ free policy};
\node[anchor=north east, at=(6.north east)]               {$6$};
\node[punkt, left = 20mm of 4]                     (7)    {surrender \\ free policy}; 
\node[anchor=north east, at=(7.north east)]               {$7$};
    
\path
(0) edge [pil, bend left = 15] node [above]  {$\mu_{01}$} (1)
(1) edge [pil]          node [below right] {$\mu_{12}$} (2)
(0) edge [pil]          node [below left] {$\mu_{02}$} (2)
(1) edge [pil, bend left = 15] node [below]  {$\mu_{10}$} (0)
(0) edge [pil]                  node [below left]   {$\mu_{04}$}    (4)
(0) edge [pil]                  node [above]        {$\mu_{03}$}    (3)
(4) edge [pil]                  node [above]        {$\mu_{47}$}    (7)
(4) edge [pil, bend left = 15] node [above]  {$\mu_{45}$} (5)
(5) edge [pil]          node [below right] {$\mu_{56}$} (6)
(4) edge [pil]          node [below left] {$\mu_{46}$} (6)
(5) edge [pil, bend left = 15] node [below]  {$\mu_{54}$} (4)
;
\end{tikzpicture}
}
\caption{Disability model with policyholder behavior. The state-space is decomposed according to $\J = \Jp\cup\Jf$ with $\Jp = \tuborg{0,\ldots,3}$ and $\Jf = \tuborg{4,\ldots,7}$.}
\label{fig:extmm_numerical}
\end{figure}
We note that the predetermined payments are actuarially fair in the sense that the equivalence principle is satisfied on the technical basis. The technical basis takes the following form:
\begin{align*}
r^\star(s) &= 0.01, \\
\mu^\star_{01}(s) &= \left(0.0004+10^{4.54+0.06(s+40)-10} \right)\!\mathds{1}_{(s\leq 25)}, \\
\mu_{10}^\star(s) &= \left(2.0058 e^{-0.117(s+40)}\right)\!\mathds{1}_{(s\leq 25)}, \\
\mu_{02}^\star(s) &= 0.0005+10^{5.88+0.038(s+40)-10}, \\
\mu_{12}^\star(s) &=  \mu_{02}^\star(s)\!\left(1+\mathds{1}_{(s\leq 25)}\right)\!. 
\end{align*}
The technical basis and the aforementioned predetermined payments further determine the surrender payments and the free policy factor, which we do not explicitly state, cf.\ Subsection~\ref{subsec:unit_val}.

The market basis takes the following form:
\begin{align*}
\mu_{02}(\cdot)&: \text{The 2012 edition of the Danish FSA's longevity benchmark}, \\ 
\mu_{01}(s) &= 10^{5.662015+0.033462(s+40)-10}\mathds{1}_{(s\leq 25)}\!, \\
\mu_{10}(s) &= 4.0116 e^{-0.117(s+40)}\mathds{1}_{(s\leq 25)}, \\
\mu_{12}(s) &= \left(0.010339 + 10^{5.070927+0.05049(s+40)-10}\right)\!\mathds{1}_{(s\leq 25)} + \mu_{02}(s)\mathds{1}_{(s>25)}, \\
\mu_{03}(s) &= \left(0.06 - 0.002s \right)\!\mathds{1}_{(s\leq 25)}, \\
\mu_{04}(s) &= 0.05\mathds{1}_{(s\leq 25)},
\end{align*} 
with $\mu_{jk} = \mu_{(j+4)(k+4)}$ for $j,k\in \J^f$, $j\neq k$. 
 
We now deviate from \cite{buchardt2015} by introducing a bond market, so that the risky asset with price process $S_1$ corresponds to a zero-coupon bond with expiry $n=70$. The short rate follows a Vasicek model, so that the dynamics are given by
\begin{align*}
\md r(t) &= \left(\beta-\alpha\cdot  r(t)\right)\!\!\md t +\sigma\!\md W(t), \quad r(0) = 0.01,  \\
\md S_1(t) &= r(t)S_1(t)\!\md t - \sigma \psi(t,n)S_1(t)\!\md W(t), \quad S_1(n) = 1, 
\end{align*}
where $\psi(t,n) = \left(1-e^{-\alpha (n-t)}\right)\!/\alpha$. The parameters $\beta$, $\alpha$, and $\sigma$ may be found in Table~\ref{table:par}. They are taken from the numerical example in~\cite{skatogomk} and yield a mean reversion to about $0.043$.

Regarding bonus payments, we consider the case where all dividends are used only to buy additional life annuity benefits, so that in particular the rate of the disability annuity is kept fixed throughout the entire contract period. Consequently, the unit bonus payments $B^\dagger$ are determined according to
\begin{align*}
b_0^\dagger(t) = b_1^\dagger(t) = 100000\cdot \mathds{1}_{(t\geq 25)}. 
\end{align*}  
 Regarding the controls $(\delta,\eta)$, we consider the dividend strategy $\delta$ introduced in Example \ref{ex:rente} with the second order interest rate \eqref{eq:r_delta_ex}, and investment strategy $\eta$ given by:
\begin{align*}
\eta(t) &= \frac{\int_t^T \psi(t,s)e^{-\int_t^s f(t,v)\md v}\bar{A}^g(t,\!\md s) }{\psi(t,T)S_1(t)}.
\end{align*}
The investment strategy is chosen such that it hedges the interest rate risk of the guaranteed cash flows $\bar{A}^g$ on a portfolio level.

The values of the parameters for the short rate model and the dividend strategy are shown in Table~\ref{table:par}. 
\begin{table}[h!]
\centering
\begin{tabular}{c|cccc}
Parameter & $\beta$ & $\alpha$ & $\sigma$ & $\kappa$ \\ \hline Value  
 & 0.007006001 
 & 0.162953 
 & 0.015384
 & 0.2
\end{tabular}
\caption{Parameters for the short rate model and dividend strategy. The parameters of the former are taken from the numerical example in~\cite{skatogomk}, which provides a mean reversion of the short rate to $\beta/\alpha \approx 0.043$.}
\label{table:par}
\end{table}

\subsection{Results and discussion} \label{subsec:num_ex_results_discussion}

The inputs as described in Subsection~\ref{subsec:proj_implement} and Subsection~\ref{subsec:implement_indep} are computed in classic fashion using standard numerical methods. Next, we carry out the state-dependent numerical procedure outlined in Subsection~\ref{subsec:proj_implement} as well as the state-independent procedure presented in Subsection~\ref{subsec:implement_indep} to determine the time zero market value of bonus payments $V^b(0)$; the computations are based $N=10000$ financial scenarios and Euler-Maruyama discretizations with step length $0.01$ years. In the latter procedure, we use the dividend strategy presented in Example~\ref{ex:second_order_int} with $r^\delta$ on the same form as above. The results are presented in Table \ref{tab:res} along with the market value of predetermined payments $V^\circ(0)$. 
\begin{table}[h!]
\centering
\begin{tabular}{cccc}
   $V^\circ(0)$  & \multicolumn{3}{c}{Time zero market value of bonus payments $V^b(0)$} \\ 
   & State-dependent & State-independent & Relative difference \\ \hline
  -72582  & 72661 & 72663 & -0.00201\%
\end{tabular}
\caption{Time zero market values from both the state-dependent and state-independent implementation. The relative difference lies within the margin of numerical error.}
\label{tab:res}
\end{table}
We see that the two implementations produce identical results, in the sense that the difference is within the margin of numerical error, for this product design and set of parameter values.

To show what is going on behind the scenes, we investigate in a bit more detail the inner workings of the state-independent and state-dependent numerical procedures. To this end, we fix the financial scenario presented in Figure~\ref{fig:scenario}.
\begin{figure}[htb]
\centering
\includegraphics[width=0.9\textwidth]{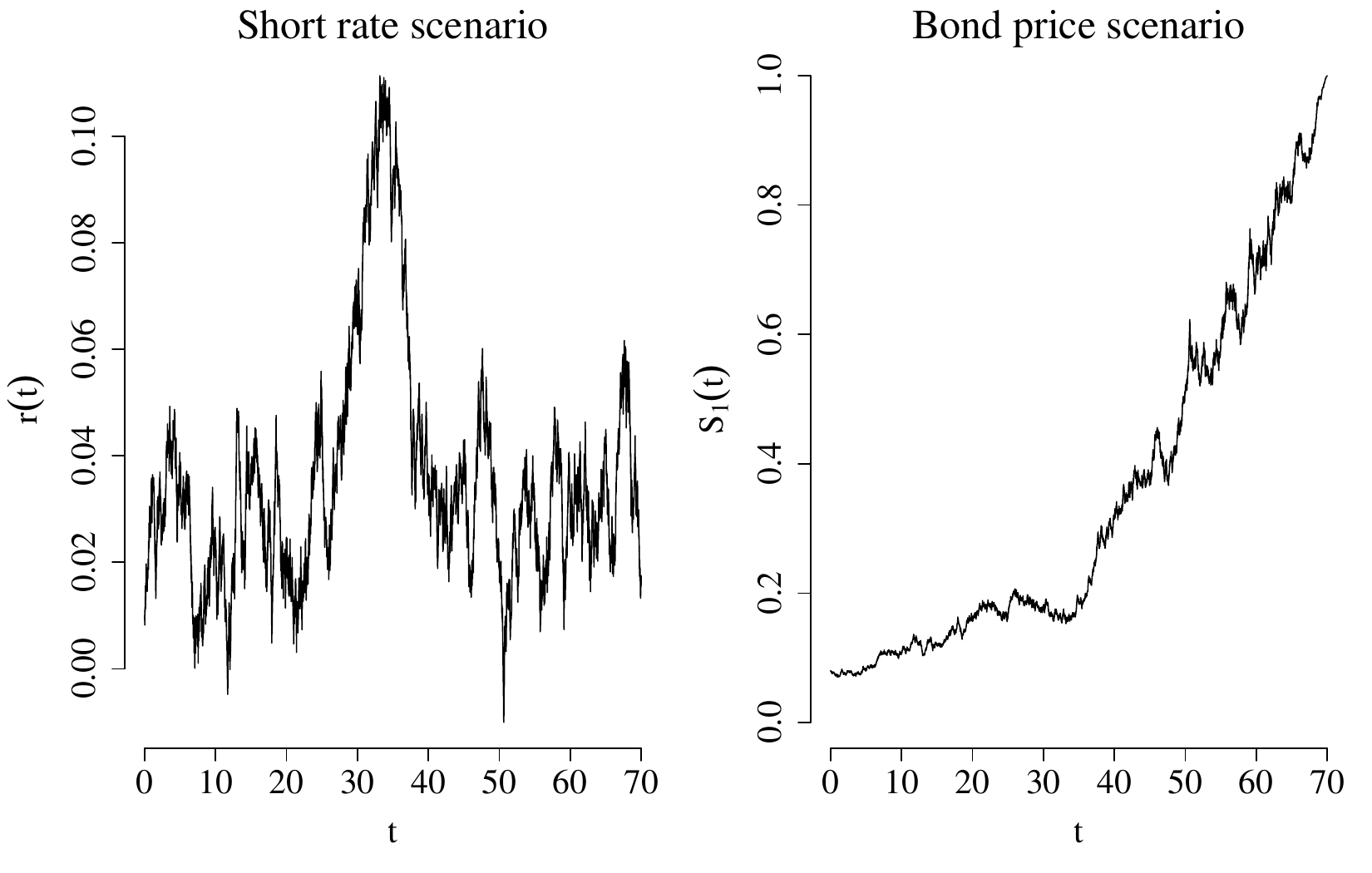}
\caption{Short rate and bond price corresponding to a single financial scenario.}
\label{fig:scenario}
\end{figure}

In Figure~\ref{fig:pQ_comparison}, we compare $t\mapsto p^Q_{z_0j}(0,t)$, $j\in \J$, of the state-dependent implementation to $t\mapsto Q(t)p_{z_0j}(0,t)$ of the state-independent implementation. With Proposition~\ref{prop:FDB} in mind, we also compare differences in bonus cash flows across states. To be precise, Figure~\ref{fig:state_wise_bonus_cf} contains a comparison of state-wise bonus cash flows, i.e.\ here we compare
\begin{align*}
t\mapsto p^Q_{z_0j}(0,t)\Big(b_j^\dagger(t) + \sum_{k\in \J \atop k\neq j} b_{jk}^\dagger(t)\mu_{jk}(t)\Big)
\end{align*}
of the state-dependent implementation to
\begin{align*}
t\mapsto Q(t)p_{z_0j}(0,t)\Big(b_j^\dagger(t) + \sum_{k\in \J \atop k\neq j} b_{jk}^\dagger(t)\mu_{jk}(t)\Big)
\end{align*}
of the state-independent implementation.
\begin{figure}[htb]
\centering
\includegraphics[width=0.9\textwidth]{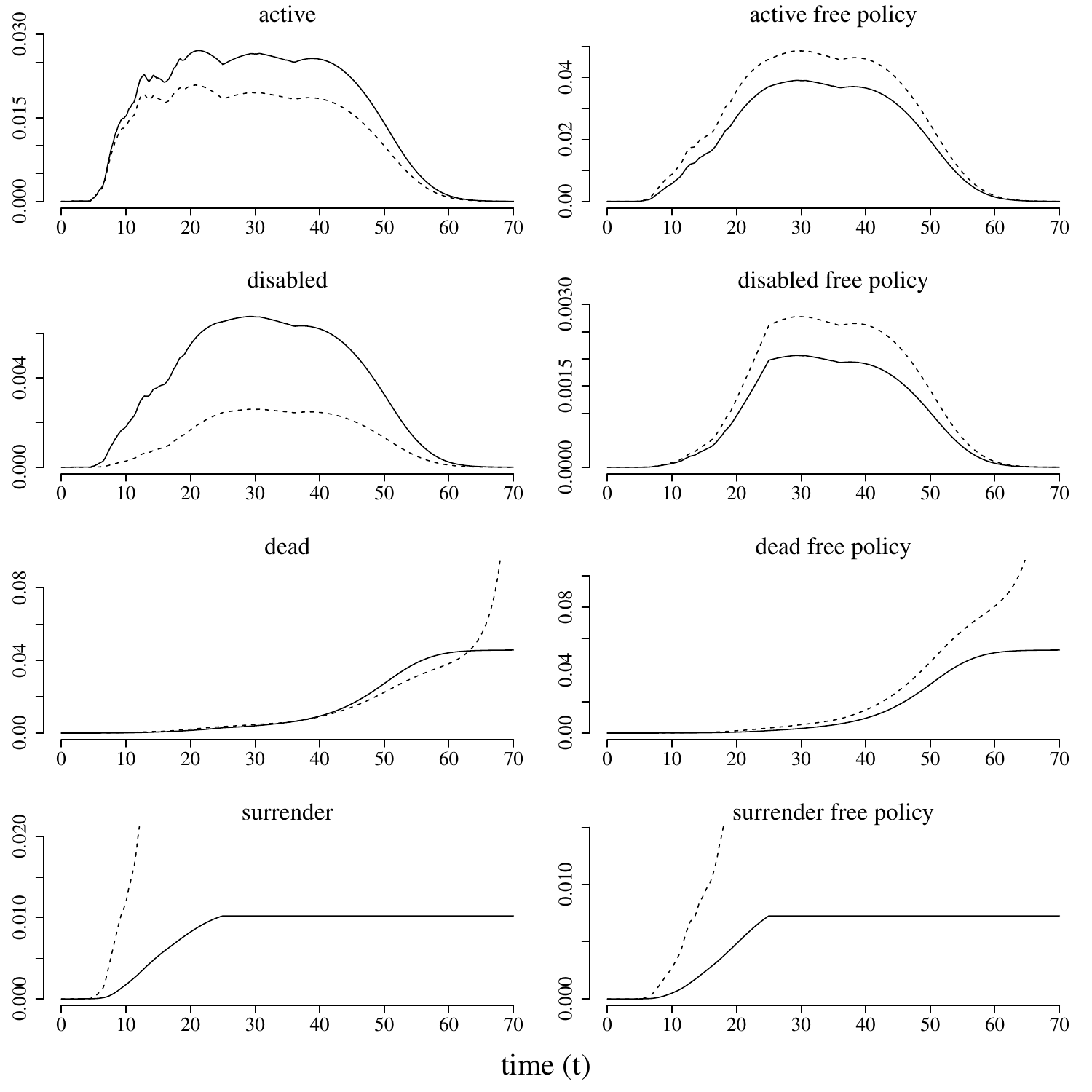}
\caption{$Q$-modified transition probabilities for a single financial scenario in the state-dependent implementation (solid line) and in the state-independent implementation (dashed line).}
\label{fig:pQ_comparison}
\end{figure}
\begin{figure}[htb]
\centering
\includegraphics[width=0.9\textwidth]{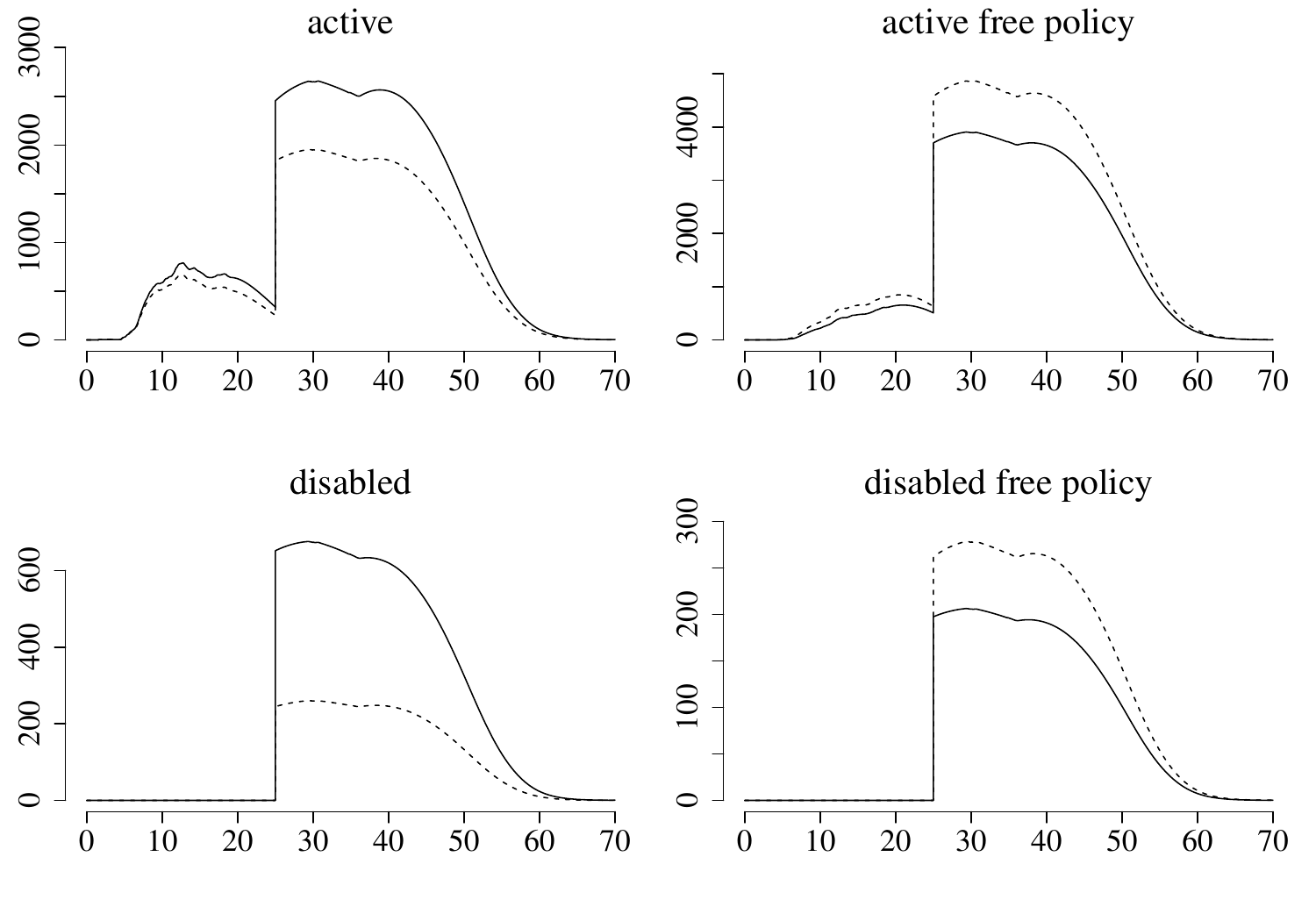}
\caption{State-wise bonus cash flows for a single financial scenario in the state-dependent implementation (solid line) and in the state-independent implementation (dashed line).}
\label{fig:state_wise_bonus_cf}
\end{figure}
In general, the two numerical procedures lead to fundamentally different intermediate quantities. The state-independent implementation both overestimates and underestimates -- depending on the state -- intermediate state-specific quantities compared to the state-dependent implementation. But intriguingly the aggregation over states cancels these differences, resulting in the same value for $V^b(0)$.

To offer a different point of view based on Remark~\ref{rmk:approx_expression} and Example~\ref{ex:second_order_int}, we take a closer look at the value of the dividend payments between implementations. To elaborate, we compare
\begin{align*}
t\mapsto \E\!\left[\left. \big(r^\delta(t)-r^\star(t)\big) V^\star(t)\frac{V^\dagger_{Z(t)}(t)}{V_{Z(t)}^{\star,\dagger}(t)}\, \right|  \mathcal{F}^S(t)\right]
\end{align*}
of the state-dependent implementation to
\begin{align*}
t\mapsto \E\!\left[\left. \big(r^\delta(t)-r^\star(t)\big) \bar{V}^{\star}(t) \frac{V_{Z(t)}^\dagger(t)}{\bar{V}^{\star,\dagger}(t)} \, \right| \mathcal{F}^S(t)\right]
\end{align*}
of the state-independent implementation. The absolute relative differences are less than $1 \%$, which indicates that it is the aggregation over states and not times that cancels the intermediate differences and gives rise to essentially identical values for $V^b(0)$.

\section{Final remarks}\label{sec:outlook}

In this section, we compare our methodology and results with recent advances in the literature and discuss possible extension in demand by practitioners. Subsection~\ref{subsec:compare} contains comparisons with~\cite{Lollike2020,Falden2020,JensenSchomacker2015}, while the inclusion of both duration effects (so-called semi-Markovianity) and the bonus scheme \textit{consolidation} is the focal point of Subsection~\ref{subsec:extend}.

\subsection{Comparison with recent advances in the literature}\label{subsec:compare}

In~\cite{Lollike2020} and the follow-up paper~\cite{Falden2020}, where the methods and results of the former are generalized to allow for surrender and free policy conversion, primary attention is given to the derivation of differential equations for quantities such as
\begin{align*}
\E\!\left[\left. \mathds{1}_{(Z(t) = j)} V^\star(t)\, \right| \mathcal{F}^S(t)\right]\!.
\end{align*}
Since $V^\star = V^{\star,\circ} + Q\cdot V^{\star,\dagger}$, we find that $t \mapsto \mathds{1}_{(Z(t) = j)} V^\star(t)$ is an affine function of $t \mapsto \mathds{1}_{(Z(t) = j)} Q(t)$. Thus disregarding free policy conversion, we see a direct link between the differential equations derived in~\cite{Lollike2020,Falden2020} and those of Theorem~\ref{thm:diffQ}. For these results suitable affinity of the dividend strategy is a key assumption.

The inclusion of the policyholder option of free policy conversion adds an additional layer of complexity. We assumed the unit bonus payment stream $B^\dagger$ to be unaffected by the free policy option, which leads to the total payment stream given by~\eqref{eq:B}. No such assumption is made in~\cite{Falden2020}, which leads to more involved payment streams, although by setting $B^{\dagger} = B^{\circ,+}$, our payment stream equals that of~\cite[][Subsection 5.2, cf.~(11)--(12)]{Falden2020}.

We consider some key concepts and provide practical insights that are not within the scope of~\cite{Lollike2020,Falden2020}. We explicitly include financial risk, which serves as a good starting point for the extension to doubly stochastic models with dependence between the financial market and the stochastic transition rates. Moreover, we identify and discuss the theoretical and practical challenges arising from the fact that the dividend strategy depends on the shape of the insurance business. Furthermore, we provide ready-to-implement numerical schemes for the computation of the market value of bonus payments. Finally, we discuss potential simplifications arising when the number of additional benefits is (approximated to be) $\mathcal{F}^S$-adapted -- the state-independent case, which might be of particular interest to practitioners.

As discussed previously, the projection model described in~\cite[][Section 4]{JensenSchomacker2015} appears to be conceptually very close to exactly our state-independent model, cf.\ Subsection~\ref{subsec:class_div} and Example~\ref{ex:second_order_int}. Consequently, we believe that our presentation among other things serves to formalize and generalize the pragmatic approach found in~\cite{JensenSchomacker2015} and, correspondingly, aims at bridging the gap between the methods and results found in~\cite{Lollike2020,Falden2020} and~\cite{JensenSchomacker2015}.

\subsection{State-independent approximations}\label{subsec:SI_approx}

The dividend strategies considered in the numerical example of Section~\ref{sec:num_ex} are to some extent absorbing in nature, meaning that they over time broadly speaking distribute the available assets to the insured. This is consistent with the fact that both the state-dependent and state-independent implementation produce a time zero market value of bonus payments identical to the available assets.

In addition to the numerical results presented in Section~\ref{sec:num_ex}, we have examined other product designs and sets of parameter values. Without reporting every detail here and now, this did not immediately produce large differences in the time zero market value of bonus payments between implementations. To uncover why this is the case, further theoretical and numerical studies are required.

\subsection{Extensions}\label{subsec:extend}

In both theory and practice, the generalization to so-called semi-Markovian models introducing duration dependence in the transition rates and payments is popular and impactful, cf.~\cite{hoem72,helwich,christiansen2012,BuchardtMollerSchmidt}. We believe that the methods we use here can easily be adapted to semi-Markovian models.

The increase in numerical speed from the general case to the state-independent case is increasing in the complexity of the intertemporal dependence structure, which can be seen as follows. Referring to Subsection~\ref{subsec:proj_implement} and Subsection~\ref{subsec:implement_indep}, the general projection model requires as input the expected unit bonus cash flows evaluated on a two-dimensional time grid, while evaluation on a one-dimensional time grid suffices for the state-independent model. When including duration effects, the complexity increases, which ought to entail a four-dimensional time/duration grid for the expected unit bonus cash flows in general projections and a two-dimensional time/duration grid in state-independent projections. The gain in numerical speed by assuming the state-independent special case is thus far greater in the semi-Markovian model compared to the Markovian model.

In Denmark, the bonus scheme known simply as \textit{consolidation} (in Danish: \textit{styrkelse}) sees widespread use in practice, cf.~\cite[][Subsection 4.1]{JensenSchomacker2015}. Consolidation involves two technical bases: a low (more prudent) basis and a high (less prudent) basis. At the onset of the contract, the predetermined payments, i.e.\ the payments guaranteed at time zero, satisfy an equivalence principle for which some payments are valuated on the high technical basis and the remaining payments are valuated on the low technical basis. Dividends are then used to shift these payments from the high to the low basis while upholding the relevant equivalence principle. Typically consolidation is combined with the bonus scheme additional benefits in the following manner. When all predetermined payments have been shifted to the low technical basis, future dividends are used to buy additional benefits. This ruins a key affinity assumption, which increases the complexity significantly. In particular, an extension of Theorem~\ref{thm:diffQ} appears to require more sophisticated methods. In the state-independent case, the assumption of affinity is not required, cf.\ Remark~\ref{rmk:non_affine_stateindep}. Consequently, we believe that it is straightforward to extend the state-independent projection model to include consolidation in combination with additional benefits.
 
\section*{Acknowledgments and declarations of interest}

Our work was at first motivated by unpublished notes by Thomas Møller. We thank Thomas Møller for fruitful discussions and for sharing his insights with us. We are also grateful to Ann-Sophie Buchardt for her assistance with the delightful figures.

Christian Furrer’s research has partly been funded by the Innovation Fund Denmark (IFD) under File No.\ 7038-00007. The authors declare no competing interests.



\vspace{4mm}
{\large
\textsc{Jamaal Ahmad} \\[2mm]
\textit{Department of Mathematical Sciences, University of Copenhagen \\[2mm]
Universitetsparken 5, DK-2100 Copenhagen \O, Denmark, \\[0mm]
E/04, Building: 04.3.20 \\[2mm]
E-Mail: \href{mailto:jamaal@math.ku.dk}{jamaal@math.ku.dk}}
}

\end{document}